\newtheorem{theorem}{Theorem}[]
\newtheorem{lemma}[theorem]{Lemma}
\newtheorem{definition}[theorem]{Definition}
\newtheorem{remark}[theorem]{Remark}
    \date{}
                 \newcommand{\falsedef}[1]{\vspace{0.1cm} \noindent
               \textbf{ #1. }}
  \newcommand{\parag}[1]{~\\\noindent \textbf{#1}}
  \newcommand{\comment}[1]{}
  \newcommand{\tinyspace}{\mspace{1mu}}
  \newcommand{\abs}[1]{\left\lvert\tinyspace #1 \tinyspace\right\rvert}
  \newcommand{\norm}[1]{\left\lVert\tinyspace #1 \tinyspace\right\rVert}
  \newcommand{\tr}{\operatorname{Tr}}
  \newcommand{\class}[1]{\textup{#1}}
  \newcommand{\half}{\ensuremath{\frac{1}{2}}}
  \newcommand{\inner}[2]{\langle #1, #2 \rangle}
  \newcommand{\calO}{\mathcal{O}}
  \newcommand{\C}{\mathbb{C}}
  \newcommand{\R}{\mathbb{R}}
  \newcommand{\dsum}{\displaystyle\sum}
  \newcommand{\ketbra}[2]{\ket{#1} \bra{#2}}
  \newcommand{\kb}[1]{\ketbra{#1}{#1}}
  \newcommand{\bracket}[2]{\langle #1 | #2 \rangle}
  \newcommand{\trnorm}[1]{\norm{#1}_{\mathrm {tr}}}
  \newcommand{\pnorm}[1]{\norm{#1}_{\mathrm{p}}}
  \newcommand{\ayes}{A_{\rm yes}} 
  \newcommand{\ano}{A_{\rm no}}
  \def\({\left(}
  \def\){\right)}
  \def\poly{\textup{poly}}
  \newcommand{\NP}{\class{NP}} 
  \newcommand{\PP}{\class{PP}} 
  \newcommand{\QMA}{\class{QMA}} 
  \newcommand{\QCMA}{\class{QCMA}} 
  \newcommand{\oSQMA}{\class{oSQMA}} 
  \newcommand{\OSQMA}{\class{oSQMA}} 
  \newcommand{\SQMA}{\class{SQMA}}
  \newcommand{\SQMAone}{\ensuremath{\class{SQMA}_1}}
  \title{\bf $\QMA$ with subset state witnesses}
  \author[1]{Alex B. Grilo}
  \author[1,2]{Iordanis Kerenidis}
  \author[2]{Jamie Sikora}
  \affil[1]{ IRIF,
    CNRS, Universit\'{e} Paris
  Diderot, Paris, France }
  \affil[2]{Centre for Quantum Technologies, National University of Singapore,
  Singapore}
\begin{document}
  \maketitle

  \begin{abstract} 
    The class \QMA{} plays a fundamental role in quantum complexity theory
    and it has found surprising connections to condensed matter physics
    and in particular in the study of the minimum energy of quantum
    systems. In this paper, we further investigate the class \QMA{} and its
    related class \QCMA{} by asking what makes quantum witnesses potentially
    more powerful than classical ones. We provide a definition of a new
    class, \SQMA{}, where we restrict the possible quantum witnesses to the
    "simpler" subset states, i.e. a uniform superposition over the elements
    of a subset of $n$-bit strings. Surprisingly, we prove that this class
    is equal to \QMA{}, hence providing a new characterisation of the class
    \QMA{}. We also prove the analogous result for $\QMA(2)$ and describe a new complete problem for \QMA{} and a stronger
    lower bound for the class $\QMA_1$.

  \end{abstract}

  \section{Introduction}
  \label{sect:Intro}

  One of the notions at the heart of classical complexity theory is the class
  $\NP$ and the fact that deciding whether a boolean formula is satisfiable or not
  is $\NP$-complete \cite{Cook71, Levin73}. The importance of
  $\NP$-completeness became apparent through the plethora of combinatorial problems
  that can be cast as constraint satisfaction problems and shown to be
  $\NP$-complete.
  Moreover, the famous PCP theorem
  \cite{AroraLMSS98, AroraS98} provided a new, surprising description of
  the class $\NP$: any language in $\NP$ can be verified efficiently by
   accessing probabilistically a constant number of bits of a polynomial-size
  witness. This opened the way to showing that in many cases, approximating the
  solution of $\NP$-hard problems remains as hard as solving them
  exactly. An equivalent definition of the PCP theorem states that it remains $\NP$-hard to decide
  whether an instance of a constraint satisfaction problem is satisfiable or any assignment
  violates at least a constant fraction of the constraints. 

  Not surprisingly, the quantum analog of the class $\NP$, defined by Kitaev
  \cite{KitaevSV02} and called $\QMA$, has been the subject of extensive study in
  the last decade. Many important properties of this class are known,
  including a strong amplification property and an upper bound of
  $\PP$ \cite{MarriottW05}, as well as numerous complete problems related to the ground state energy of different types of Hamiltonians
  \cite{KitaevSV02, KR03, OT05, CM13, HallgrenNN13}. Nevertheless, there are still many open questions about the class $\QMA$, including whether it admits perfect completeness or not. 

  Moreover, it is still wide-open if a quantum PCP theorem exists. One way to
  phrase the quantum PCP theorem is that any problem in $\QMA$ can be verified
  efficiently by a quantum verifier accessing a constant number of qubits of a
  polynomial-size quantum witness. Another way would be that the problem of
  approximating the ground state energy of a local Hamiltonian within a constant
  is still $\QMA$-hard. There have been a series of results, mostly negative,
  towards the goal of proving or disproving the quantum PCP theorem, but there is
  still no conclusive evidence \cite{AharonovAV13}.

  Another important open question about the class $\QMA$ is whether the witness
  really need be a quantum state or it is enough for the polynomial-time quantum
  verifier to receive a classical witness. In other words, whether the class
  $\QMA$ is equal to the class $\QCMA$, which is the class of problems that are
  decidable by a polynomial-time quantum verifier who receives a
  polynomial-size classical witness. Needless to say, resolving this question
  can also have implications to the quantum PCP theorem, since in case the two
  classes are the same, the quantum witness can be replaced by a classical one,
  which may be more easily checked locally. In addition, we know that perfect
  completeness is achievable for the class $\QCMA$ \cite{JordanKNN12}. 

  In this paper, we investigate the class $\QMA$ by asking the following simple, yet fundamental question: what makes a quantum witness potentially more powerful than a classical one? Is it the fact that to describe a quantum state one needs to specify an exponential number of possibly different  amplitudes?
   Is it the different relative phases in the quantum state? Or is it something else altogether? 

  \parag{$\QMA$ with subset state witnesses.}
  We provide a definition of a new class, where we restrict the quantum witnesses to be as "classical" as possible, without having by definition an efficient classical description (otherwise our class would be trivially equal to $\QCMA$). All definitions and statements of the results are made formal in their respective sections. 

  For any subset $S \subseteq [d]$, we define the subset state $\ket{S} \in \C^d$, as the uniform superposition over the elements of $S$. More precisely, 
  $ \ket{S} := \frac{1}{\sqrt{|S|}} \sum_{i \in S} \ket{i} \textrm{.} $

  \falsedef{The class \SQMA{}}
    A promise problem  $A = (\ayes, \ano)$ is in $\SQMA{}$ ({\bf
    S}ubset state {\bf \QMA{}}) if for every $x \in \ayes \cup \ano$, there
    exists a polynomial time quantum verifier $V_x$, such that
    \begin{itemize}
      \item \textit{(completeness)} for all $x \in \ayes$, there exists a subset state witness
        $\ket{S}$, such that the verifier accepts with probability at least $2/3$.
      \item \textit{(soundness)} for all $x \in \ano$ and all quantum witnesses $\ket{\psi}$,
        the verifier accepts with probability at most $1/{3}$.
    \end{itemize}
 
  The only difference from $\QMA$ is that in the yes-instances, we ask that there exists a {\em subset state witness} that makes the quantum verifier accept with high probability. In other words, an honest prover need only provide such subset states, which in principle are conceptually simpler. 

  Notice, nevertheless, that the Group Non-Membership Problem is in $\SQMA$, since the witness in the known $\QMA$ protocol is a subset state \cite{Watrous00}. Moreover, we can define a version of our class with two non-entangled provers, similarly to $\QMA(2)$, and we can again see that the protocol of Blier and Tapp \cite{BlierT09} which shows that any language in $\NP$ has a $\QMA(2)$ proof system with logarithmic size quantum messages uses such subset states. Hence, even though the witnesses we consider are quite restricted, some of the most interesting containments still hold for our class.

  Even more surprisingly, our main result shows that $\SQMA$ is, in fact, equal to $\QMA$ and the same for the two-prover case. 

\quad \\ 
\textbf{Result 1.} 
\quad 
$\SQMA = \QMA$ \quad and \quad $\SQMA(2)=\QMA(2)$. \\ 
 
  Hence, for any problem in $\QMA$, the quantum witness can be a subset state. This provides a new way of looking at $\QMA$ and shows that if quantum witnesses are more powerful than classical ones, then this relies solely on the fact that a quantum witness can, in some sense, convey information about an arbitrary subset of classical strings through a uniform superposition of its elements. On the other hand, one way to prove that classical witnesses are as powerful as quantum witnesses, is to find a way to replace such subset states with a classical witness, possibly by enforcing more structure on the accepting subset states. 

  Our proof relies on a geometric lemma, which shows, {for instance}, that for any unit vector in $\C^{2^n}$, there exists a subset state, such that their inner product is  $\Omega(1/{\sqrt n})$. This lemma, in conjunction with standard amplification techniques for $\QMA$ imply our main result. 

  \parag{Complete problems.} 
  The canonical $\QMA$-complete problem is the following: Given a Hamiltonian acting on an $n$-qubit system, which is a sum of "local" Hamiltonians each acting on a constant number of qubits, decide whether the ground state energy is at most $a$ or all states have energy at least $b$, where $b-a \geq 1/poly(n)$. The first question is whether we can show that the same problem is complete if we look at the energy of any subset state instead of the ground state. In fact, we do not know how to show that this problem is complete: when we try to follow Kitaev's proof of completeness and approximate his {\em history state} with a subset state, we cannot retain a sufficient energy gap. Moreover, there exist Hamiltonians with a low energy ground state, but the energy of all subset states is close to 1. 
 
  In this work, we provide one new complete problem for \QMA{} related to
  subset states. This problem is based on the $\QCMA$-complete
  problem Identity Check on Basis States~\cite{WocjanJB03}. 
  
\quad \\ 
\textbf{Result 2.}
\quad 
The following Basis State Check on Subset States problem is $\QMA$-complete: 
  \begin{itemize}
    \item Input: Let $x$ be a classical description of a quantum circuit $Z_x$ on $m$ qubits and
$y$ be an $m'$-bit string, where $n:=|x|$ and $m' \leq m$. Given the
  promise that $x$ satisfies one of the following cases for some 
polynomial\footnote{This polynomial needs to have degree at least that of
$m$ (see Theorem~\ref{thm:bscss} for a formal statement).} $q$, decide which is true: 
  \item Yes: there is a subset $S$ such that $\norm{(\bra{y} \otimes I) Z_x
    \ket{S}}_2^2 \geq 1 - 1/q(n), $
  \item No: for all subsets $S$, we have  
    $ \norm{(\bra{y} \otimes I) Z_x \ket{S}}_2^2 \leq 1/q(n)$.
  \end{itemize}

  \parag{Perfect completeness.}
  Another important open question about $\QMA$ is whether it admits
  perfect completeness. Using our characterisation, this question can be
  reduced to the question of whether $\SQMA$ is equal to $\SQMA_1$. On one hand,
  the result of \cite{Aaronson09} can be used to show that there exists a quantum oracle $A$ relative to which these two classes are not equal, i.e., $\SQMA^A \neq \SQMA_1^A$. On the other hand, 
  proving perfect completeness for $\SQMA$ may be an easier problem to solve, since unlike $\QMA$, the amplitudes involved in the subset states are much easier to handle. 
  Even though we are unable to prove perfect completeness for $\SQMA$, we prove perfect completeness for the following closely related class.

\quad \\ 
\textbf{The class \oSQMA.} 
    A promise problem  $A = (\ayes, \ano)$ is in $\oSQMA{}$ 
    ({\bf o}ptimal {\bf S}ubset state {\bf \QMA{}})
    if for every $x \in
   \ayes \cup \ano$,
    there exists a polynomial time quantum verifier $V_x$, such that
    \begin{itemize}
      \item \textit{(completeness)} for all $x \in \ayes$, there exists a subset state witness
        $\ket{S}$ that maximizes the probability the verifier accepts and this probability is at least ${2}/{3}$.
      \item \textit{(soundness)} for all $x \in \ano$ and all quantum witnesses $\ket{\psi}$,
        the verifier accepts with probability at most $1/{3}$.
    \end{itemize}

  This class still contains the Group Non-Membership problem, while its two-prover version has short proofs for $\NP$.  It remains open to understand whether demanding that a subset state is the optimal witness, instead of just an accepting one, reduces the computational power of the class. Moreover, these two classes coincide in the case of perfect completeness, since all accepting witnesses are also optimal. We prove that the class $\oSQMA$ admits perfect completeness, which implies a stronger lower bound for the class $\QMA_1$ than the previously known $\QCMA$ bound.

\quad \\ 
\textbf{Result 3.} 
\quad  
  $\SQMA_1 = \OSQMA_1=\OSQMA$ \quad and hence, \quad $\OSQMA \subseteq \QMA_1 \subseteq \QMA$. \\ 

  The fact that for the class $\oSQMA$ there exists a subset state which is an optimal witness implies that 
  the maximum acceptance probability is rational and moreover, it is the maximum eigenvalue of the verifier's operator. These two facts
  enable us to extend the rewind technique used by Kobayashi, Le Gall and Nishimura \cite{KobayashiLGN13} and prove our
  result. 
  
  \section{Preliminaries}
  \label{sect:Prelim}

  \subsection{Definitions}
  Let $\Sigma = \{0,1\}$. For $n \in \mathbb{N}$, we define  $[n] := \{1, ..., n\}$.
  The Hilbert-Schmidt or trace inner product between two operators $A$ and $B$ is
    defined as $\inner{A}{B} = \tr(A^\dag B)$.
  For a complex number $x = a + ib$, $a,b \in \R$, we define its norm $|x|$ by $\sqrt{a^2 + b^2}$.
  For a vector $\ket{v} \in \C^d$, its $p$-norm is defined as ${\pnorm{\ket{v}} := \left(\sum_{1
  \leq i \leq d} |v_i|^p\right)^\frac{1}{p}}$.
  For an operator $A$, the trace norm is $\trnorm{A} := \tr{\sqrt{A^\dag A}}$, which is the sum of the singular values of $A$. 

We now state two identities which we use in our analysis.  For normalized
$\ket{v}, \ket{w} \in \C^d$, we have 
\begin{equation} \label{UnitVectorIdentity1}
\max_{0 \preceq C \preceq I}
|\inner{C}{\kb{v} - \kb{w}}| =  \frac{1}{2} \trnorm{\kb{v} - \kb{w}},  
\end{equation} 
since $\kb{v} - \kb{w}$  has largest eigenvalue $\lambda \geq 0$, smallest
eigenvalue $-\lambda$, and the rest are $0$, and the trace norm of a Hermitian
matrix is the sum of the absolute values of its eigenvalues.  We also have that
for $\ket{v}, \ket{w} \in \C^d$,
    \begin{equation} \label{UnitVectorIdentity2}
  \trnorm{\kb{v} - \kb{w}} = 2 \sqrt{1 - |\bracket{v}{w}|^2}. 
  \end{equation}

  \subsection{Complexity classes and complete problems}
  \label{ssec:complexityclasses}
  We start by defining the known quantum complexity classes we will study and a complete problem. 

  \begin{definition}[\QMA]\label{def:QMA}
      A promise problem $A=(\ayes,\ano)$ is in \class{QMA} if and only if there exist polynomials $p$, $q$ and a polynomial-time uniform family of quantum circuits $\set{Q_n}$, where $Q_n$ takes as input a string $x\in\Sigma^*$ with $\abs{x}=n$, a $p(n)$-qubit quantum state, and $q(n)$ ancilla qubits in state $\ket{0}^{\otimes q(n)}$, such that:
      \begin{itemize}
      \item (completeness) If $x\in\ayes$, then there exists a $p(n)$-qubit quantum state $\ket{\psi}$ such that $Q_n$ accepts $(x,\ket{\psi})$ with probability at least $2/3$.
      \item (soundness) If $x\in\ano$, then for any $p(n)$-qubit quantum state $\ket{\psi}$, $Q_n$ accepts $(x,\ket{\psi})$ with probability at most $1/3$.
      \end{itemize} 
  \end{definition}

We can restrict \QMA{} in order to always accept yes-instances, a property called
\emph{perfect completness}.
  \begin{definition}[$\QMA_1$]\label{def:QMA1}
      A promise problem $A=(\ayes,\ano)$ is in $\class{QMA}_1$ if and only if there
      exist polynomials $p$, $q$ and a polynomial-time uniform family of quantum
      circuits $\set{Q_n}$, where $Q_n$ takes as input a string $x\in\Sigma^*$
      with $\abs{x}=n$, a $p(n)$-qubit quantum state, and $q(n)$ ancilla qubits in state $\ket{0}^{\otimes q(n)}$, such that:
      \begin{itemize}
      \item (completeness) If $x\in\ayes$, then there exists a $p(n)$-qubit quantum state $\ket{\psi}$ such that $Q_n$ accepts $(x,\ket{\psi})$ with probability exactly $1$.
      \item (soundness) If $x\in\ano$, then for any $p(n)$-qubit quantum state
        $\ket{\psi}$, $Q_n$
        accepts $(x, \ket{\psi})$ with probability at most $1/3$.
      \end{itemize} 
  \end{definition} 

Another way we can restrict \QMA{} is only allowing classical witnesses, resulting in
the definition of the class \QCMA{} (sometimes also referred to as \class{MQA}~\cite{W09_2,GSU13}).
  \begin{definition}[\QCMA]\label{def:QCMA}
      A promise problem $A=(\ayes,\ano)$ is in \class{QCMA} if and only if there
      exist polynomials $p$, $q$ and a polynomial-time uniform family of quantum
      circuits $\set{Q_n}$, where $Q_n$ takes as input a string $x\in\Sigma^*$
      with $\abs{x}=n$, a $p(n)$-bit string, and $q(n)$ ancilla qubits in state $\ket{0}^{\otimes q(n)}$, such that:
      \begin{itemize}
      \item (completeness) If $x\in\ayes$, then there exists a $p(n)$-bit string
        $y$ such that $Q_n$ accepts $(x,y)$ with probability at least $2/3$.
      \item (soundness) If $x\in\ano$, then for any $p(n)$-bit string $y$, $Q_n$
        accepts $(x, y)$ with probability at most $1/3$.
      \end{itemize} 
  \end{definition} 

  We state here one \QCMA{}-complete problem, the \emph{Identity Check on Basis
  States} problem~\cite{WocjanJB03}. 

\begin{definition}[Identity Check on Basis States~\cite{WocjanJB03}] 
  \label{def:icbs}
Let $x$ be a classical description of a quantum circuit $Z_x$ on $m$ qubits.
Given the promise that $Z_x$ satisfies one of the following cases for $\mu - \delta \geq 1/\poly(|x|)$, decide which
one is true:
\begin{itemize}
\item either there is a binary string $z$ such that 
$|\bra{z} Z_x \ket{z}|^2 \leq 1 - \mu$, 
i.e., $Z_x$ does not act as the identity on the basis states, 
\item or for all binary strings $z$, 
$|\bra{z} Z_x \ket{z}|^2 \geq 1 - \delta$, 
i.e., $Z_x$ acts ``almost'' as the identity on the basis states. 
\end{itemize} 
\end{definition} 

We also consider the two (unentangled) provers version of $\QMA$, defined below.

  \begin{definition}[\QMA(2)]\label{def:QMAtwo}
      A promise problem $A=(\ayes,\ano)$ is in \class{QMA}(2) if and only if there exist polynomials $p$, $q$ and a polynomial-time uniform family of quantum circuits $\set{Q_n}$, where $Q_n$ takes as input a string $x\in\Sigma^*$ with $\abs{x}=n$, two unentangled $p(n)$-qubit quantum states, and $q(n)$ ancilla qubits in state $\ket{0}^{\otimes q(n)}$, such that:
      \begin{itemize}
      \item (completeness) If $x\in\ayes$, then there exists two unentangled $p(n)$-qubit quantum states $\ket{\psi}$ and $\ket{\phi}$ such that $Q_n$ accepts $(x,\ket{\psi},\ket{\phi})$ with probability at least $2/3$.
      \item (soundness) If $x\in\ano$, then for any two unentangled $p(n)$-qubit quantum states $\ket{\psi}$ and $\ket{\phi}$, $Q_n$ accepts $(x,\ket{\psi},\ket{\phi})$ with probability at most $1/3$.
      \end{itemize} 
  \end{definition}

\section{Subset state approximations}  

In this section
we state and prove the Subset State Approximation Lemma which
intuitively says that any quantum state can be well-approximated by a subset
state, defined below. 

\begin{definition}
For a subset $S \subseteq [d]$, a \emph{subset state}, denoted here as $\ket{S} \in \C^d$, is a uniform superposition over the the elements of $S$. More specifically, it has the form 
\[ \ket{S} := \frac{1}{\sqrt{|S|}} \sum_{i \in S} \ket{i} \textrm{.} \]
\end{definition}

We now state and prove a useful technical lemma.

\begin{lemma}[Geometric Lemma] 
For a vector $v \in \C^d$, there exists a subset $S \subseteq [d]$ such that 
\[ \frac{1}{\sqrt{|S|}} \left| \sum_{j \in S} v_j \right| \geq \frac{ \norm{v}_2}{8 \sqrt{\log_2(d) + 3}}. \] 
\end{lemma}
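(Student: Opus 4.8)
The plan is to reduce to nonnegative coordinates via a phase argument and then exploit that the quantity $\frac{1}{\sqrt{|S|}}\sum_{j\in S}r_j$ is, by Cauchy--Schwarz, at most $\sqrt{\sum_{j\in S}r_j^2}$, with near-equality precisely when the coordinates indexed by $S$ have comparable magnitudes. Normalize so that $\norm{v}_2=1$ and write $v_j=r_j e^{i\theta_j}$ with $r_j\ge 0$. First I would partition the nonzero coordinates into four classes according to which of the four half-open quadrant arcs $[\tfrac{k\pi}{2},\tfrac{(k+1)\pi}{2})$ contains $\theta_j$. Since these classes partition the coordinates, their squared-$\ell_2$ masses sum to $1$, so some class, with index set $T$, satisfies $\sum_{j\in T}r_j^2\ge\tfrac14$. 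Because all phases in $T$ lie in an arc of width $\pi/2$, projecting onto its bisecting direction shows that $\bigl|\sum_{j\in S}v_j\bigr|\ge\tfrac{1}{\sqrt 2}\sum_{j\in S}r_j$ for \emph{every} $S\subseteq T$, so it suffices to find $S\subseteq T$ making $\frac{1}{\sqrt{|S|}}\sum_{j\in S}r_j$ large.

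Next I would discard the negligible coordinates of $T$: those with $r_j\le\tfrac{1}{\sqrt{8d}}$ contribute at most $d\cdot\tfrac{1}{8d}=\tfrac18$ to $\sum_{j\in T}r_j^2$, so the coordinates $j\in T$ with $r_j>\tfrac{1}{\sqrt{8d}}$ still carry squared-$\ell_2$ mass at least $\tfrac14-\tfrac18=\tfrac18$. I would then sort these remaining coordinates into dyadic buckets by scale: let $B_\ell$ collect the $j$ with $2^{-\ell-1}<r_j\le 2^{-\ell}$. Since each such $r_j$ lies in $(\tfrac{1}{\sqrt{8d}},1]$, only buckets with $0\le\ell<\log_2\sqrt{8d}$ occur, hence at most $N:=\log_2 d+3$ of them, and by pigeonhole some bucket $B_\ell$ has $\sum_{j\in B_\ell}r_j^2\ge\tfrac{1}{8N}$.

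Finally I would set $S=B_\ell$ and use the comparability of magnitudes inside a single bucket. With $m:=|B_\ell|$ and $2^{-\ell-1}<r_j\le 2^{-\ell}$ on $B_\ell$, we have $\sum_{j\in B_\ell}r_j\ge m\,2^{-\ell-1}$ and $\sum_{j\in B_\ell}r_j^2\le m\,2^{-2\ell}$, so $\frac{1}{\sqrt m}\sum_{j\in B_\ell}r_j\ge\sqrt m\,2^{-\ell-1}\ge\tfrac12\sqrt{\sum_{j\in B_\ell}r_j^2}\ge\tfrac12\sqrt{\tfrac{1}{8N}}$. Combining with the quadrant bound, $\frac{1}{\sqrt{|S|}}\bigl|\sum_{j\in S}v_j\bigr|\ge\tfrac{1}{\sqrt 2}\cdot\tfrac{1}{2}\sqrt{\tfrac{1}{8N}}=\tfrac{1}{8\sqrt N}$, and undoing the normalization gives the stated bound $\tfrac{\norm{v}_2}{8\sqrt{\log_2 d+3}}$.

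I do not expect a serious obstacle, but the one genuinely load-bearing idea --- the place where a naive setup would lose too much --- is to grade the buckets by their squared-$\ell_2$ mass rather than by $\ell_1$ mass, so that the total mass is exactly $\norm{v}_2^2$ and Cauchy--Schwarz is essentially an equality on any single dyadic scale; the $\sqrt{\log_2 d+3}$ factor then comes solely from the number of scales, which the truncation step caps at $O(\log d)$. The remaining points --- pinning the bucket count down to precisely $\log_2 d+3$ (including small-$d$ edge cases) and checking the projection constant $\cos(\pi/4)=1/\sqrt 2$ --- are routine bookkeeping.
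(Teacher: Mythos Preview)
Your proposal is correct and follows essentially the same approach as the paper: reduce to coordinates with aligned phases, discard small-magnitude coordinates, bucket the rest dyadically by magnitude, and apply pigeonhole on the squared-$\ell_2$ mass. The only cosmetic difference is in the phase reduction---the paper splits into real/imaginary and then positive/negative parts (and uses $|\langle s,v\rangle|\ge|\langle s,u\rangle|$ at the end), whereas you partition the phases into four quadrant arcs and project onto the bisector; the rest of the argument, including the dyadic bucketing and the final arithmetic, is the same.
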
 

\begin{proof} 
If $v=0$, the lemma statement is trivially true. Suppose $v \in \C^d$ is a nonzero vector and decompose $v$ into real and imaginary parts as $v = u + i w$, where
  $u, w \in \R^d$.  Note that 
\[ 	\norm{v}_2 \leq \norm{u}_2 + \norm{w}_2, 
\]
 by the triangle inequality, implying at least one has norm at
  least $\norm{v}_2/2$. Let us say it is $u$ (the argument
  for $w$ proceeds analogously). 
We now partition $u$ into positive and negative entries such that $u = x - y$ 
where $x, y \geq 0$ and are orthogonal. By the same argument as above, we know at
least one has norm at least $\norm{v}_2/4$. Without loss of generality, {suppose} it is $x$. 
 
Let $T$ denote the support of $x$, i.e., $j \in T$ if and only if $x_j > 0$. 
The idea is to partition $T$ into a small number of sets, where the entries $x_j$ that belong to each set are roughly the same {size}, and the sum of the entries corresponding to one set is a large enough fraction of the norm of the entire vector.

More precisely, let us partition $T$ into the following sets:  
\[ T_k := \left\{j \in T:  \frac{\norm{x}_2}{2^{k}} < x_j \leq \frac{\norm{x}_2}{2^{k-1}} \right\}, \text{ for } k \in [\gamma], \quad T_{\gamma+1} := 
\left\{ j \in T:  0 < x_j \leq \frac{\norm{x}_2}{2^{\gamma}} \right\}   
\] 
for $\gamma := \left\lceil \frac{\log_2(d) + 1}{2} \right\rceil$. We have
\[
\dsum_{j \in \cup_{k \in [\gamma]} T_k} (x_j)^2 = \norm{x}_2^2 - \dsum_{j \in  T_{\gamma+1}} (x_j)^2 \geq \norm{x}_2^2 -d \frac{\norm{x}_2^2}{2^{2\gamma}} =  \norm{x}_2^2 \left(1 - \frac{d}{2^{2\gamma}} \right). 
\]
This implies that there exists $k' \in [\gamma]$ such that 
\[
\dsum_{j \in  T_{k'}} (x_j)^2
\geq 
\frac{ \norm{x}_2^2}{\gamma} \left(1 - \frac{d}{2^{2\gamma}} \right).
\]
Using the definition of $T_k$, we have
\begin{equation*} 
   |T_{k'}| \frac{\norm{x}^2_2}{2^{2(k'-1)}}   \geq \dsum_{j \in  T_{k'}} (x_j)^2
\geq 
\frac{ \norm{x}_2^2}{\gamma} \left(1 - \frac{d}{2^{2\gamma}} \right),
\end{equation*} 
which implies the following lower bound for the size of $T_{k'}$
\begin{equation}\label{GeoProofEqn1}
 |T_{k'}| \geq \frac{  2^{2(k'-1)}  }{\gamma} \left(1 - \frac{d}{2^{2\gamma}} \right).
\end{equation} 
Using again the definition of $T_k$ and Equation~(\ref{GeoProofEqn1}), we have  

\[
\frac{1}{\sqrt{|T_{k'}|}} \sum_{j \in T_{k'}} x_j  \geq \frac{\sqrt{|T_{k'}|} \norm{x}_2}{2^{k'}} \geq \frac{  \norm{x}_2 2^{(k'-1)}  }{2^{k'}\sqrt{\gamma}} \sqrt{ 1 - \frac{d}{2^{2\gamma}}} 
\geq \frac{\norm{v}_2}{8 \sqrt{\log_2(d)+3}}.
\]

Let $S:=T_{k'}$  and $s$ be the vector where $s_j = \frac{1}{\sqrt{|S|}}$ if $j \in S$ and $0$ otherwise. We  have 
\begin{equation*} 
\frac{1}{\sqrt{|S|}} \left| \sum_{j \in S} v_j \right|  
=
|\inner{s}{v}| 
= 
|\inner{s}{u} + i \inner{s}{w}| 
\geq 
|\inner{s}{u}| 
= 
\left| \frac{1}{\sqrt{|S|}} \sum_{j \in S} x_j \right| 
\geq 
\frac{\norm{v}_2}{8 \sqrt{\log_2(d) + 3}}  
\end{equation*} 
as desired. 
\end{proof}

The technique used above of splitting the amplitudes into sets is similar to a
proof in~\cite{JainUW09} which showed a result for approximating bipartite
states by a uniform superposition of  their Schmidt basis vectors. Note that our
result holds for any state and, since we are concerned with a particular fixed
basis,  we need to deal with arbitrary complex amplitudes.

\begin{lemma}[Subset State Approximation Lemma]
  \label{lemma:approximation}
For any $n$-qubit state $\ket{\psi}$, there is a subset $S \subseteq [N]$, where $N := 2^n$, such that $|\bracket{S}{\psi}| \geq \dfrac{1}{8 \sqrt{n+3}}$.  
\end{lemma}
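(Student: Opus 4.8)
The plan is to reduce the statement about $|\bracket{S}{\psi}|$ directly to the Geometric Lemma already proved. Writing $\ket{\psi} = \sum_{j \in [N]} v_j \ket{j}$ with $\norm{v}_2 = 1$, the quantity $|\bracket{S}{\psi}|$ is exactly $\frac{1}{\sqrt{|S|}}\bigl|\sum_{j \in S} v_j\bigr|$, which is the left-hand side of the Geometric Lemma applied to the vector $v \in \C^N$. So the Subset State Approximation Lemma is nothing more than the Geometric Lemma instantiated at $d = N = 2^n$.

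Concretely, I would first recall that for $S \subseteq [N]$,
\[
\bracket{S}{\psi} = \frac{1}{\sqrt{|S|}} \sum_{i \in S} \bracket{i}{\psi} = \frac{1}{\sqrt{|S|}} \sum_{j \in S} v_j,
\]
so $|\bracket{S}{\psi}| = \frac{1}{\sqrt{|S|}}\bigl|\sum_{j \in S} v_j\bigr|$. Then I would invoke the Geometric Lemma with the vector $v$ (which is nonzero since $\norm{v}_2 = 1$) and $d = N$: it guarantees a subset $S \subseteq [N]$ with
\[
\frac{1}{\sqrt{|S|}}\left|\sum_{j \in S} v_j\right| \geq \frac{\norm{v}_2}{8\sqrt{\log_2(N)+3}} = \frac{1}{8\sqrt{n+3}},
\]
using $\log_2(N) = \log_2(2^n) = n$ and $\norm{v}_2 = 1$. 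Combining the two displays gives $|\bracket{S}{\psi}| \geq \frac{1}{8\sqrt{n+3}}$, which is exactly the claim.

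There is essentially no obstacle here: the hard work is entirely contained in the Geometric Lemma (the amplitude-bucketing argument), and this lemma is just its specialization to dimension a power of two together with the observation that the inner product of a subset state with $\ket{\psi}$ equals the normalized partial sum of amplitudes. The only things to be careful about are the trivial bookkeeping — that $[N]$ indexes the computational basis of $\C^N \cong (\C^2)^{\otimes n}$, that normalization of $\ket{\psi}$ translates to $\norm{v}_2 = 1$, and that $\log_2(2^n) = n$ exactly so the bound comes out clean.
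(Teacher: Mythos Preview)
Your proposal is correct and matches the paper's approach exactly: the Subset State Approximation Lemma is stated immediately after the Geometric Lemma with no separate proof, because it is precisely the specialization $d = 2^n$, $\norm{v}_2 = 1$ that you describe. There is nothing to add.
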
 

\begin{remark}
  \label{rem:power2}
  We can further assume the size of the subset is a power of $2$ and lose at
  most a constant factor in the approximation (equal to $\frac{1}{2}$).
\end{remark}

We also show that this approximation factor is optimal by presenting an $n$-qubit
state $\ket{\psi_n}$, for any $n$, where the above bound is tight (up to
constant factors).  In high level, the state has $2^\ell$
basis states with amplitude $\frac{1}{\sqrt{n}\sqrt{2^\ell}}$, for $0 \leq \ell \leq n$, and hence, each of these $n$ subsets of basis states has only a $1/n$ fraction of the total ``weight'' and the amplitudes between different subsets are sufficiently different.
\begin{lemma}
  \label{thm:tightapproximation}
For any $n$, define the following $n$-qubit state 
  \[ \ket{\psi_n} := 
    \sum_{1 \leq i \leq 2^n - 1} \frac{1}{\sqrt{n}\sqrt{2^{\lfloor\log{i}
  \rfloor}}} \ket{i}. \]
Then we have that $\bracket{\psi_n}{S} \leq \frac{2+ \sqrt{2}}{\sqrt{n}} $, for all $S \subseteq [2^n]$. 
\end{lemma}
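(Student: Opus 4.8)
The plan is to reduce $\bracket{\psi_n}{S}$ to a one-variable optimization over the ``level masses'' of $S$ and then bound that by splitting the relevant sum at a carefully chosen scale. Group the basis states $\ket{1},\dots,\ket{2^n-1}$ into the $n$ blocks $\{2^\ell,\dots,2^{\ell+1}-1\}$, $\ell=0,\dots,n-1$, on which $\ket{\psi_n}$ is constant with amplitude $\frac{1}{\sqrt n}2^{-\ell/2}$ and block-weight $1/n$. For $S\subseteq[2^n]$ put $s_\ell:=|S\cap\{2^\ell,\dots,2^{\ell+1}-1\}|$, so that $0\le s_\ell\le 2^\ell$, and set $M:=\sum_{\ell=0}^{n-1}s_\ell\le|S|$ (the index $2^n$ lies outside the support of $\ket{\psi_n}$). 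A direct computation then gives $\bracket{\psi_n}{S}=\frac{1}{\sqrt{n\,|S|}}\sum_{\ell=0}^{n-1}s_\ell 2^{-\ell/2}\le \frac{1}{\sqrt{nM}}\sum_{\ell=0}^{n-1}s_\ell 2^{-\ell/2}$ (the case $M=0$ being trivial), so it suffices to show $\frac{1}{\sqrt M}\sum_\ell s_\ell 2^{-\ell/2}\le 2+\sqrt 2$ whenever $0\le s_\ell\le 2^\ell$ and $\sum_\ell s_\ell=M\ge 1$.

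I would then split the sum at the scale $L:=\lfloor\log_2 M\rfloor$, which satisfies $0\le L\le n-1$ and $2^L\le M<2^{L+1}$. For the low levels, use $s_\ell\le 2^\ell$ and a geometric series: $\sum_{\ell=0}^{L-1}s_\ell 2^{-\ell/2}\le\sum_{\ell=0}^{L-1}2^{\ell/2}=\frac{2^{L/2}-1}{\sqrt2-1}$. For the high levels, use $2^{-\ell/2}\le 2^{-L/2}$ together with the mass budget: $\sum_{\ell=L}^{n-1}s_\ell 2^{-\ell/2}\le 2^{-L/2}\sum_{\ell\ge L}s_\ell\le M\,2^{-L/2}$. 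Dividing by $\sqrt M$ and writing $r:=\sqrt M/2^{L/2}\in[1,\sqrt2)$, the bound becomes $\frac{1-2^{-L/2}}{(\sqrt2-1)\,r}+r<\frac{\sqrt2+1}{r}+r$. Finally I observe that $g(r):=\frac{\sqrt2+1}{r}+r$ is strictly decreasing on $[1,\sqrt2]$, since $g'(r)=1-\frac{\sqrt2+1}{r^2}<0$ there (its critical point $\sqrt{\sqrt2+1}\approx1.55$ exceeds $\sqrt2$), whence $g(r)\le g(1)=2+\sqrt2$. This yields $\frac{1}{\sqrt M}\sum_\ell s_\ell 2^{-\ell/2}<2+\sqrt2$ and hence $\bracket{\psi_n}{S}\le\frac{2+\sqrt2}{\sqrt n}$.

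The main obstacle — really the only non-routine decision — is the choice of the split point $L$: splitting too low lets the tail carry $\Theta(\sqrt M)$ of the weight and the estimate degenerates to the useless bound $\sqrt M$, while splitting too high throws away the head. Taking $L=\lfloor\log_2 M\rfloor$ is precisely the threshold at which both pieces are $\Theta(\sqrt M)$ and the resulting expression in $r$ lands in the decreasing regime of $g$, so the maximum over $r$ is attained at the left endpoint. Everything else — evaluating the geometric series, the substitution $r=\sqrt M/2^{L/2}$, the one-variable monotonicity check, and the bookkeeping edge cases ($M\in\{0,1\}$, and whether $2^n\in S$) — is routine. An alternative would be to argue that the extremal $S$ is the ``greedy'' one filling the lowest levels first and bound that configuration directly, but the splitting argument sidesteps having to justify that LP-type optimality claim.
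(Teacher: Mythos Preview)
Your argument is correct, but it takes a genuinely different route from the paper. The paper first observes that the amplitudes $\langle\psi_n|i\rangle$ are non-increasing in $i$, so for fixed $|S|=m$ the inner product $\bracket{\psi_n}{S}$ is maximized by the greedy choice $S=[m]$; this is exactly the ``LP-type optimality claim'' you mention at the end as an alternative. With that reduction in hand, they pick $k$ with $2^k\le m<2^{k+1}$, bound the numerator by the full geometric sum $\sum_{t=0}^{k}2^{t/2}=(1+\sqrt2)(2^{(k+1)/2}-1)$, and divide by $\sqrt m\ge 2^{k/2}$ to get the constant $2+\sqrt2$ in two lines. Your approach instead keeps $S$ arbitrary, records the level masses $s_\ell$, splits at $L=\lfloor\log_2 M\rfloor$, and reduces to a one-variable monotonicity check for $g(r)=\frac{\sqrt2+1}{r}+r$ on $[1,\sqrt2)$. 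Both reach the same constant; the paper's proof is shorter because the rearrangement step (which is immediate here since all amplitudes are nonnegative and sorted) collapses the problem to a single parameter $m$ up front, whereas your splitting argument effectively re-derives that collapse after the fact. Your route has the minor advantage of not invoking any extremality claim about $S$, and the level-splitting template would survive in situations where the amplitudes are not monotone.
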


\begin{proof} 
We see that the amplitudes are non-increasing and thus a subset state that would approximate it the best would be of the form $S = [m]$ for some $m \leq 2^n-1$. 
Thus, we prove now that for all $m$, $S = [m]$ gives an approximation of at
most $\frac{ \sqrt{2} + 2}{\sqrt{n}}$. 

Let $k \in \{ 0, 1, \ldots, n-1 \}$ be such that $2^{k} \leq m \leq 2^{k+1} - 1$. 
We see that 
\[ 
\sum_{i=1}^m \frac{1}{\sqrt{n}\sqrt{2^{\lfloor\log{i}\rfloor}}} 
\leq 
\sum_{i=1}^{2^{k+1}-1} \frac{1}{\sqrt{n}\sqrt{2^{\lfloor\log{i}\rfloor}}} 
=
\sum_{t=0}^{k} \frac{2^t}{\sqrt{n}\sqrt{2^{t}}} = 
\sum_{t=0}^{k} \frac{\sqrt{2^t}}{\sqrt{n}} = \dfrac{
\left( 1+\sqrt{2} \right) \left( \sqrt{2^{k+1}} -1 
\right) }{\sqrt{n}}, 
\] 
where the last equality follows from the formula for a truncated geometric series. We have  
\[ 
\dfrac{1}{\sqrt{m}} \sum_{i-1}^m \frac{1}{\sqrt{n}\sqrt{2^{\lfloor\log{i}\rfloor}}} 
\leq 
\dfrac{
\left( 1+\sqrt{2} \right) \left( \sqrt{2^{k+1}} -1 
\right) }{\sqrt{n} \sqrt{2^k}}
\leq 
\dfrac{
\left( 1+\sqrt{2} \right) \sqrt{2^{k+1}}}{\sqrt{n} \sqrt{2^k}} 
= 
\dfrac{2+\sqrt{2}}{\sqrt{n}},  
\]
as desired. 
\end{proof}  

\section{Alternative characterisations of \class{QMA} and \class{QMA}(2)}  

In this section, we prove that $\QMA$ and its two-prover variant can be characterized such that they accept 
subset states. We start by defining formally the new complexity class that is 
by definition contained in $\QMA$.

\begin{definition}[\SQMA]\label{def:SQMA}
  A promise problem $A=(\ayes,\ano)$ is in \SQMA{} if and only if there exist
  polynomials $p$, $q$ and a polynomial-time uniform family of quantum circuits
  $\set{Q_n}$, where $Q_n$ takes as input a string $x\in\Sigma^*$ with
  $\abs{x}=n$, a $p(n)$-qubit quantum state, and $q(n)$ ancilla qubits in state
  $\ket{0}^{\otimes q(n)}$, such that:
    \begin{itemize}
    \item (completeness) If $x\in\ayes$, then there exists a subset $S \subseteq [2^{q(n)}]$ such that $Q_n$ accepts $(x,\ket{S})$ with probability at least $2/3$.
    \item (soundness) If $x\in\ano$, then for any $p(n)$-qubit quantum state $\ket{\psi}$, $Q_n$ accepts $(x,\ket{\psi})$ with probability at most $1/3$.
    \end{itemize} 
\end{definition}
 
\paragraph{Remark.}  
Note that we restricted the witness only in the completeness criterion. In fact, it is straightforward to adapt any $\QMA$ protocol to have a subset state being an optimal witness in the soundness criterion. For example, the prover can send an extra qubit with the original witness and the verifier can measure it in the computational basis. If the outcome is $0$, he continues verifying the proof. If it is $1$, he flips a coin and accepts with probability, say, $1/3$. It is easy to see that an optimal witness for the soundness probability is the string of all $1$'s, which is classical, hence a subset state! Therefore, restricting the proofs in the completeness criterion is the more natural and interesting case. 
 
We prove now that, surprisingly, this restriction does not change the
computational power of $\QMA$.

\begin{theorem} \label{thm:QMAequalsSQMA}
$\QMA = \SQMA$. 
\end{theorem}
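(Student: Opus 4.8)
The proof splits into the two inclusions. The direction $\SQMA \subseteq \QMA$ is immediate, since every subset state is a $p(n)$-qubit quantum state and the soundness clauses of Definitions~\ref{def:QMA} and~\ref{def:SQMA} are identical; so all the work is in $\QMA \subseteq \SQMA$. The plan there is to sandwich the Subset State Approximation Lemma (Lemma~\ref{lemma:approximation}) between two rounds of Marriott--Watrous amplification~\cite{MarriottW05}. The key feature of the latter that I would use is that it amplifies a $\QMA$-type gap while \emph{reusing the witness register verbatim} (only rewriting the verification circuit); in particular it sends an honest subset-state witness to the same honest subset-state witness.

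Concretely, starting from a $\QMA$ verifier for $A$ on an $m := p(n)$-qubit witness register, I would first amplify (keeping $m$ fixed) to completeness at least $9/10$ and soundness at most $\tfrac{1}{100(m+3)}$. Let $M$ be the induced witness operator, so $0 \preceq M \preceq I$ and the acceptance probability on a witness $\ket{\phi}$ is $\bra{\phi} M \ket{\phi}$. On a yes-instance, pick $\ket{\psi}$ to be a top eigenvector of $M$, with eigenvalue $\lambda \geq 9/10$, and apply Lemma~\ref{lemma:approximation} to get a subset state $\ket{S}$ with $|\bracket{S}{\psi}| \geq \tfrac{1}{8\sqrt{m+3}}$. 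Writing $\ket{S} = \alpha \ket{\psi} + \beta \ket{\psi^\perp}$ with $\ket{\psi^\perp} \perp \ket{\psi}$, the cross terms vanish because $M \ket{\psi} = \lambda \ket{\psi}$, and $M \succeq 0$ yields
\[
 \bra{S} M \ket{S} \;=\; |\alpha|^2 \lambda + |\beta|^2 \bra{\psi^\perp} M \ket{\psi^\perp} \;\geq\; |\alpha|^2 \lambda \;\geq\; \frac{9}{640(m+3)}.
\]
So this verifier accepts the subset state $\ket{S}$ with probability at least $\tfrac{9}{640(m+3)}$ on yes-instances, and accepts every quantum state with probability at most $\tfrac{1}{100(m+3)} < \tfrac{9}{640(m+3)}$ on no-instances: an $\SQMA$-style protocol with an inverse-polynomial gap.

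To recover the $(2/3,1/3)$ thresholds I would then apply Marriott--Watrous amplification a second time to this verifier. Since it reuses the witness register, an honest prover sending the subset state $\ket{S}$ makes the amplified verifier run on $\ket{S}$ itself, so the honest witness is still a subset state; completeness goes up to at least $2/3$ while soundness against all $p(n)$-qubit states goes down to at most $1/3$, matching Definition~\ref{def:SQMA}. Hence $A \in \SQMA$.

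The step I expect to be the crux is the approximation/amplification interplay. A naive bound on the completeness loss via the trace-distance identity $\trnorm{\kb{S} - \kb{\psi}} = 2\sqrt{1 - |\bracket{S}{\psi}|^2}$ (equation~(\ref{UnitVectorIdentity2})) is useless, since $|\bracket{S}{\psi}|$ is only inverse-polynomial and the acceptance probability could collapse to nearly $0$. The fix is to take $\ket{\psi}$ to be the top eigenvector of the positive semidefinite $M$, which gives $\bra{S} M \ket{S} \geq |\bracket{S}{\psi}|^2 \bra{\psi} M \ket{\psi}$ with no extra loss; but this only guarantees inverse-polynomial acceptance, which is exactly why the preliminary amplification (to push the no-case probability below the appropriate inverse-polynomial threshold) and the subsequent re-amplification are both needed --- hence the two-phase structure. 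The remaining facts --- that Marriott--Watrous amplification preserves the witness register and works from any $1/\poly$ gap --- are standard, and the polynomial/constant bookkeeping is routine.
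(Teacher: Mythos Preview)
Your overall strategy---amplify, approximate the optimal witness by a subset state, then amplify again---matches the paper's, and your eigenvector argument for lower-bounding $\bra{S} M \ket{S}$ is a clean alternative to the paper's trace-norm calculation (the paper instead first drives completeness to $1 - 2^{-r(n)}$ and then uses Equations~(\ref{UnitVectorIdentity1})--(\ref{UnitVectorIdentity2}) to get $\bra{S} C \ket{S} \geq \tfrac{1}{2}|\bracket{\psi}{S}|^2 - 2^{-r(n)}$; your observation that the cross terms vanish when $\ket{\psi}$ is the top eigenvector lets you get away with constant completeness at that stage). So the first half of your argument is correct and arguably tidier.

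The gap is in your second amplification. Marriott--Watrous does reuse the witness register, but it amplifies the \emph{eigenvalues} of the witness operator, not the acceptance probability of a fixed non-eigenvector. Concretely, if $\ket{S} = \sum_j c_j \ket{v_j}$ in the eigenbasis of $M$, the MW-amplified acceptance of $\ket{S}$ is $\sum_j |c_j|^2\, g_N(\lambda_j)$, where $g_N$ is (approximately) a step function at the chosen threshold $t$. In a yes-instance nothing forbids $\ket{S}$ from carrying weight $|\beta|^2 \approx 1 - \tfrac{1}{64(m+3)}$ on eigenvectors whose eigenvalues sit just \emph{below} any threshold $t > s'$ you could choose (the promise constrains only $\lambda_{\max}$, not the rest of the spectrum). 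For such $\ket{S}$ the MW-amplified acceptance is stuck near $|\alpha|^2 = \Theta(1/m)$, not $\geq 2/3$. So the claim that MW ``sends an honest subset-state witness to the same honest subset-state witness'' with boosted completeness is not justified.

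The easy fix---what the paper's ``standard error reduction techniques'' amounts to here---is $k$-fold parallel repetition with a threshold, sending $\ket{S}^{\otimes k}$, which \emph{is} a subset state (the uniform superposition over $S^k \subseteq [2^m]^k$). Completeness on the product witness is $\Pr[\mathrm{Bin}(k,c') \geq t]$ by independence; for soundness, the per-copy POVMs commute, so against any (possibly entangled) witness the accept count is a mixture of products of independent Bernoullis with parameters $\leq s'$, giving at most $\Pr[\mathrm{Bin}(k,s') \geq t]$. Taking $t = k(c'+s')/2$ and $k = \mathrm{poly}(m)$ recovers the $(2/3,1/3)$ thresholds with a genuine subset-state witness.
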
  
\begin{proof} 
  We have trivially that $\SQMA \subseteq \QMA$ by definition, thus we only need to show that
$\QMA \subseteq \SQMA$. 

Suppose we have a $\QMA$ protocol which verifies a $p(n)$-qubit proof
$\ket{\psi}$ with the two-outcome POVM measurement $\{ C, I - C \}$. More
precisely, without loss of generality, we assume that there exists a polynomial
$r$ such that if $x \in \ayes$, there exists a state $\ket{\psi}$ such that 
$\bra{\psi} C \ket{\psi} \geq 1 - 2^{-r(n)}$ and if $x \in \ano$, we have for
every $\ket{\psi}$, that $\bra{\psi} C \ket{\psi} \leq 2^{-r(n)}$. We show that
the same verification above accepts a subset state with probability at least
$\Omega(1/p(n))$, from which we conclude that the same instance can be decided with
a $\SQMA$ protocol using standard error reduction techniques.
 
If $x \in \ano$ there is nothing to show (since the soundness condition for
$\QMA$ and $\SQMA$ coincide). Suppose $x \in \ayes$ and let $\ket{\psi}$ be a
proof which maximizes the acceptance probability. We then use the Subset State
Approximation Lemma (Lemma~\ref{lemma:approximation}) to approximate
$\ket{\psi}$ with $\ket{S}$, where $S \subseteq [2^{p(n)}]$, satisfies:   
\begin{equation} \label{QMAproofeq1}
|\bracket{\psi}{S}| \geq \dfrac{1}{8 \sqrt{p(n)+3}}
\end{equation} 
We now show that the acceptance probability of $\ket{S}$ is not too small. Note that   
\begin{equation} \label{QMAproofeq2}
\bra{S} C \ket{S}  
= 
\inner{C}{\kb{S}} 
= 
\inner{C}{\kb{\psi}} - \inner{C}{\kb{\psi} - \kb{S}}  
\end{equation} 
and since $\inner{C}{\kb{\psi}} \geq 1 - 2^{-r(n)}$, we concentrate now on bounding $\inner{C}{\kb{\psi} - \kb{S}}$. Clearly, we have 
\begin{equation} \label{QMAproofeq3}
\inner{C}{\kb{\psi} - \kb{S}} 
\leq \max_{0 \preceq C \preceq I}
|\inner{C}{\kb{\psi} - \kb{S}}| =  \frac{1}{2} \trnorm{\kb{\psi} - \kb{S}},  
\end{equation} 
where the last equality comes from Equation~(\ref{UnitVectorIdentity1}).

We now have 
\begin{equation} \label{QMAproofeq4}
\trnorm{\kb{\psi} - \kb{S}} = 2 \sqrt{1 - |\bracket{\psi}{S}|^2} \leq 2 - |\bracket{\psi}{S}|^2, 
\end{equation} 
where the {equality} follows from Equation~(\ref{UnitVectorIdentity2}) and the {inequality} from the fact that, for $x \geq 0$, we have $\sqrt{1-x^2} \leq 1 -
x^2/2$. Combining Equations~(\ref{QMAproofeq1}), (\ref{QMAproofeq2}),
(\ref{QMAproofeq3}), and (\ref{QMAproofeq4}), we have 
\[
\bra{S} C \ket{S} 
\geq 
1 - 2^{-r(n)} - \frac{1}{2} \left( 2 - |\bracket{\psi}{S}|^2 \right) 
=  
\frac{1}{2} |\bracket{\psi}{S}|^2 - 2^{-r(n)}  
\geq 
\frac{1}{128(p(n)+3)} - 2^{-r(n)}. 
\] 
Thus, $\ket{S}$ is accepted with probability $\Omega \left( \dfrac{1}{p(n)} \right)$, as required.  
\end{proof} 

We now define formally the class $\SQMA(2)$. 

  \begin{definition}[\SQMA(2)]\label{def:SQMAtwo}
      A promise problem $A=(\ayes,\ano)$ is in \class{SQMA}(2) if and only if there exist polynomials $p$, $q$ and a polynomial-time uniform family of quantum circuits $\set{Q_n}$, where $Q_n$ takes as input a string $x\in\Sigma^*$ with $\abs{x}=n$, two unentangled $p(n)$-qubit quantum states, and $q(n)$ ancilla qubits in state $\ket{0}^{\otimes q(n)}$, such that:
      \begin{itemize}
      \item (completeness) If $x\in\ayes$, then there exists two  subsets $S, T \subseteq [2^{p(n)}]$ such that $Q_n$ accepts $(x,\ket{S},\ket{T})$ with probability at least $2/3$.
      \item (soundness) If $x\in\ano$, then for any two unentangled $p(n)$-qubit quantum states $\ket{\psi}$ and $\ket{\phi}$, $Q_n$ accepts $(x,\ket{\psi},\ket{\phi})$ with probability at most $1/3$.
      \end{itemize} 
  \end{definition} 
 
\begin{theorem} \label{thm:QMA2equalsSQMA2}
$\QMA(2) = \SQMA(2)$. 
\end{theorem}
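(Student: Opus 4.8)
The plan is to mimic the proof of Theorem~\ref{thm:QMAequalsSQMA}, but applied to each of the two unentangled registers separately. As before, $\SQMA(2) \subseteq \QMA(2)$ is immediate, so the work is in showing $\QMA(2) \subseteq \SQMA(2)$. First I would take a $\QMA(2)$ protocol with strongly amplified parameters: a two-outcome POVM $\{C, I-C\}$ acting on two unentangled $p(n)$-qubit registers such that in the yes-case there exist product states $\ket{\psi}\otimes\ket{\phi}$ with $\bra{\psi}\!\bra{\phi} C \ket{\psi}\!\ket{\phi} \geq 1 - 2^{-r(n)}$, and in the no-case every product state is accepted with probability at most $2^{-r(n)}$. (Amplification for $\QMA(2)$ that preserves the product structure of the witnesses is standard --- parallel repetition with an AND, cf.\ Harrow--Montanaro --- so I would invoke it as a known fact.) The no-case soundness condition is unchanged between $\QMA(2)$ and $\SQMA(2)$, so only the yes-case needs new argument.

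For the yes-case, let $\ket{\psi}\otimes\ket{\phi}$ be an optimal product witness. Apply the Subset State Approximation Lemma (Lemma~\ref{lemma:approximation}) twice: there is a subset $S \subseteq [2^{p(n)}]$ with $|\bracket{\psi}{S}| \geq \frac{1}{8\sqrt{p(n)+3}}$ and a subset $T \subseteq [2^{p(n)}]$ with $|\bracket{\phi}{T}| \geq \frac{1}{8\sqrt{p(n)+3}}$. Then the product subset state $\ket{S}\otimes\ket{T}$ satisfies $|\bracket{\psi\otimes\phi}{S\otimes T}| = |\bracket{\psi}{S}|\cdot|\bracket{\phi}{T}| \geq \frac{1}{64(p(n)+3)}$. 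Now I repeat the estimate from the single-prover proof verbatim, with $\ket{\psi}$ replaced by $\ket{\psi}\otimes\ket{\phi}$ and $\ket{S}$ replaced by $\ket{S}\otimes\ket{T}$: using Equations~(\ref{UnitVectorIdentity1}) and (\ref{UnitVectorIdentity2}) together with $\sqrt{1-x^2}\leq 1-x^2/2$, one gets
\[
\bra{S}\!\bra{T} C \ket{S}\!\ket{T} \geq \tfrac{1}{2}|\bracket{\psi\otimes\phi}{S\otimes T}|^2 - 2^{-r(n)} \geq \tfrac{1}{2}\cdot\tfrac{1}{64^2(p(n)+3)^2} - 2^{-r(n)},
\]
which is $\Omega(1/p(n)^2)$. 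Finally, invoke $\QMA(2)$ amplification once more (this time on the resulting weak-completeness $\SQMA(2)$ protocol) to push the completeness--soundness gap back up to $(2/3, 1/3)$ --- crucially, amplification preserves both the product structure \emph{and} the subset-state structure of the witnesses, since the amplified honest witness is a tensor power of the original product subset state, and a tensor power of subset states is again a subset state (on the larger index set).

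The main obstacle --- really the only subtle point --- is making sure that the amplification used is compatible with the subset-state restriction. Ordinary $\QMA$ amplification (Marriott--Watrous) works by sending several copies of the witness, and $k$ copies of a subset state $\ket{S}$ is the subset state $\ket{S^k}$ over the product index set, so it stays within the class; for $\QMA(2)$ one must additionally check that the amplification does not require entangling the two provers' registers, which is exactly what the Harrow--Montanaro product-test-based amplification (or the simpler "repeat and take AND" argument that already works starting from an exponentially small gap, as we have here) provides. I would spell out this compatibility in a sentence or two and otherwise point to Theorem~\ref{thm:QMAequalsSQMA}, since the analytic core is identical modulo squaring the inner-product bound.
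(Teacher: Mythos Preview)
Your proposal is correct and matches the paper's approach essentially line for line: amplify the $\QMA(2)$ protocol via \cite{HM10}, approximate each factor of an optimal product witness by a subset state using Lemma~\ref{lemma:approximation}, rerun the trace-norm estimate from Theorem~\ref{thm:QMAequalsSQMA} on the product, and then re-amplify noting that tensor powers of product subset states remain product subset states. One small quibble: Marriott--Watrous amplification is the \emph{in-place} procedure that does \emph{not} require multiple witness copies; the ``send $k$ copies and majority-vote'' argument you describe is the older, naive amplification, but either works here and your conclusion that $\ket{S}^{\otimes k}=\ket{S^k}$ is a subset state is what matters.
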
  
\begin{proof} 
We can use error reduction techniques~\cite{HM10} to assume the
completeness and soundness of the $\QMA(2)$ protocol are $1-2^{-r(n)}$ and
$2^{-r(n)}$, respectively, for some polynomial $r$. If we approximate both
witnesses using subset states and use the same analysis from the one-prover case, we have an inverse
polynomial gap between completeness (using the two subset state witnesses) and the soundness. One can again use the error
reduction techniques from \cite{HM10} since the witnesses (in the reduced error protocol) can be a tensor product of these subset states. 
\end{proof}
 
\section{A $\QMA$-complete problem based on subset states}

In this section, we give a complete problem for $\QMA$ based on circuits mapping subset states to a basis state. This is similar to the $\QCMA$-complete problem
Identity Check on Basis States (see Definition~\ref{def:icbs}).

\begin{definition}[Basis State Check on Subset States (\class{BSCSS}($\alpha$))] 
Let $x$ be a classical description of a quantum circuit $Z_x$ on $m(n)$ input qubits and $a(n)$ ancilla qubits, and
  $y$ be an $m'(n)$-bit string, such that $n:=|x|$, $m$, $a$, and $m'$ are bounded by polynomials and $m' \leq m + a$. Given the promise that $x$ satisfies one of the following cases, decide which is true:
\begin{itemize}
  \item either there exists a subset $S \subseteq [2^{m(n)}]$ such that 
\[ \norm{(\bra{y} \otimes I) Z_x \ket{S} \ket{0}^{\otimes a(n)}}_2^2 \geq 1 - \alpha, \]  
\item or for all subsets $S \subseteq [2^{m(n)}]$, we have  
\[ \norm{(\bra{y} \otimes I) Z_x \ket{S} \ket{0}^{\otimes a(n)}}_2^2 \leq \alpha. \] 

\end{itemize} 
\end{definition} 
  
\begin{theorem} \label{thm:bscss}
For any polynomial $r$, the problem \class{BSCSS} is $\QMA$-complete for $2^{-r(n)} \leq \alpha \leq {\frac{1}{257 (m(n) + 3)}}$.   
\end{theorem}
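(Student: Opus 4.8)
The plan is to prove the two inclusions separately: $\class{BSCSS}(\alpha)\in\QMA$ for the upper bound on $\alpha$, and $\class{BSCSS}(\alpha)$ is $\QMA$-hard for the lower bound on $\alpha$, so that together they give completeness.

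For membership, I would use the obvious verifier: receive a witness $\ket{\psi}$ on the $m(n)$ input qubits, append $\ket{0}^{\otimes a(n)}$, apply $Z_x$, measure the $m'(n)$ designated output qubits and accept iff the outcome is $y$. The first step is to write the acceptance probability as a quadratic form: with $A:=Z_x^\dagger(\kb{y}\otimes I)Z_x$ (positive semidefinite, $0\preceq A\preceq I$) and $\tilde{A}:=(I\otimes\bra{0}^{\otimes a(n)})A(I\otimes\ket{0}^{\otimes a(n)})$, one checks $0\preceq\tilde{A}\preceq I$ and that $\ket{\phi}$ is accepted with probability exactly $\bra{\phi}\tilde{A}\ket{\phi}$; in particular a subset state $\ket{S}$ is accepted with probability $\bra{S}\tilde{A}\ket{S}=\norm{(\bra{y}\otimes I)Z_x\ket{S}\ket{0}^{\otimes a(n)}}_2^2$. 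Completeness is then immediate, since in the Yes case a subset state is a legitimate quantum witness and is accepted with probability at least $1-\alpha>2/3$ (as $\alpha\leq\frac{1}{257(m+3)}$). For soundness I would apply Lemma~\ref{lemma:approximation} \emph{to the top eigenvector of $\tilde{A}$}: if $\ket{\phi}$ is a unit eigenvector with eigenvalue $\lambda_{\max}(\tilde{A})$, there is a subset $S$ with $|\bracket{S}{\phi}|\geq\frac{1}{8\sqrt{m+3}}$, and since every eigenvalue of $\tilde{A}$ is nonnegative, $\bra{S}\tilde{A}\ket{S}\geq\lambda_{\max}(\tilde{A})\,|\bracket{S}{\phi}|^2\geq\frac{\lambda_{\max}(\tilde{A})}{64(m+3)}$. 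In the No case the left-hand side is at most $\alpha$, so $\lambda_{\max}(\tilde{A})\leq 64(m+3)\alpha\leq\frac{64}{257}<\frac13$; since $\bra{\psi}\tilde{A}\ket{\psi}\leq\lambda_{\max}(\tilde{A})$ for every unit $\ket{\psi}$, no quantum witness (subset state or not) is accepted with probability more than $1/3$. Hence $\class{BSCSS}(\alpha)\in\QMA$ with no amplification needed.

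For hardness I would reduce from an arbitrary $A\in\QMA$ using $\QMA=\SQMA$ (Theorem~\ref{thm:QMAequalsSQMA}) together with the fact that an $\SQMA$ protocol can be amplified while keeping the honest witness a subset state — either by parallel repetition with majority vote, since a tensor product of subset states $\ket{S}^{\otimes t}$ is itself a subset state (on the product index set), or by in-place amplification, which uses a single copy of the witness. So I would fix a polynomial $r$ large enough (to be matched with $\alpha$ below) and obtain an $\SQMA$ verifier for $A$ with completeness $\geq 1-2^{-r(n)}$ realized by a subset state and soundness $\leq 2^{-r(n)}$ against all quantum witnesses. Writing this verifier as: apply a unitary $V$ to the $p(n)$-qubit witness register plus a $\ket{0}$-initialised ancilla register, then measure a designated output register and accept on outcome $y$. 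I would then output the $\class{BSCSS}$ instance with $Z_x:=V$, $m(n):=p(n)$, the verifier's ancillas as the $a(n)$ ancilla qubits, and $y$ the accepting string. Since the $\SQMA$ acceptance probability of $\ket{S}$ equals $\norm{(\bra{y}\otimes I)Z_x\ket{S}\ket{0}^{\otimes a(n)}}_2^2$, a Yes instance of $A$ maps to a Yes instance of $\class{BSCSS}$ (using $2^{-r(n)}\leq\alpha$) and a No instance to a No instance; the reduction is polynomial time. The only bookkeeping is to choose the parameters consistently: $r$ is taken so that $2^{-r(n)}\leq\alpha(n)$ while $\alpha(n)\leq\frac{1}{257(m(n)+3)}=\frac{1}{257(p(n)+3)}$ must hold for the membership direction — this is exactly the degree requirement on the polynomial $q$ (with $\alpha=1/q$) flagged in the footnote.

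I expect the main obstacle, and the conceptual core of the argument, to be the soundness step for membership: a priori, knowing only that every \emph{subset} state has small acceptance probability does not obviously bound arbitrary quantum witnesses, and indeed applying Lemma~\ref{lemma:approximation} directly to a putative cheating witness is too lossy. The trick is to apply it instead to the top eigenvector of $\tilde{A}$, which immediately bounds the whole spectrum; this is precisely what forces the threshold $\alpha=\Theta(1/m)$ rather than an arbitrary inverse polynomial, and makes the constant $257$ (comfortably beating $64\cdot 3$) appear. The hardness direction is comparatively routine once Theorem~\ref{thm:QMAequalsSQMA} is in hand, the only subtlety there being that amplification must be carried out in a way that preserves the subset-state structure of the honest witness.
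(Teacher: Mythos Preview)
Your proof is correct, and for the hardness direction it is essentially the paper's argument. For the membership direction, however, your soundness analysis is genuinely different from the paper's and in fact cleaner. The paper approximates an \emph{arbitrary cheating witness} $\ket{\psi}$ by a subset state $\ket{S}$ via Lemma~\ref{lemma:approximation} and then bounds
\[
\bra{\psi}\tilde A\ket{\psi}\;\le\;\bra{S}\tilde A\ket{S}+\tfrac12\trnorm{\kb{\psi}-\kb{S}}\;\le\;\alpha+\Bigl(1-\tfrac{1}{128(m+3)}\Bigr),
\]
which only gives an inverse-polynomial completeness--soundness gap and therefore requires a further amplification step to land inside $\SQMA=\QMA$. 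You instead apply the lemma to the \emph{top eigenvector} of $\tilde A$ and use positivity of the spectrum to get $\lambda_{\max}(\tilde A)\le 64(m+3)\alpha\le 64/257<1/3$ directly, so the obvious verifier is already a bona fide $\QMA$ verifier with no amplification needed. Your route also shows membership for the wider range $\alpha<\tfrac{1}{192(m+3)}$; the constant $257$ in the theorem statement is the one forced by the paper's trace-norm argument (where the gap vanishes near $\alpha=\tfrac{1}{256(m+3)}$), not by yours, so your closing remark attributing the $257$ to the eigenvector trick is slightly off, but this is cosmetic.
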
 

Before we prove this result, we first motivate why we study this problem. At
first glance, it looks very similar to a trivial complete problem for $\QMA$.
However, \class{BSCSS} only considers subset states in both the yes and
no-instances, as opposed to arbitrary states for the version for $\QMA$.
Moreover, one may ask what happens to the computational power of $\SQMA$ if one
were to restrict to only rejecting subset states in the soundness criterion in
the definition. The bounds on $\alpha$ in the theorem above give bounds on the
completeness-soundness gap required for this modified definition of $\SQMA$ to
still be equivalent to $\QMA$.
 
To prove the theorem, we show $\SQMA$-hardness and containment in $\SQMA$ separately. The result then follows since $\SQMA = \QMA$. 

\begin{lemma} 
The problem \class{BSCSS} is in $\SQMA$ for $\alpha \leq {\frac{1}{257 (m(n) + 3)}}$. 
\end{lemma}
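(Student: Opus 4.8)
The plan is to give a natural $\SQMA$ verifier for $\class{BSCSS}(\alpha)$ and argue its completeness and soundness. On input $x$ (a description of a circuit $Z_x$ on $m(n)$ input qubits and $a(n)$ ancilla qubits) and $y$ (an $m'(n)$-bit string), the verifier asks for a $p(n) := m(n)$-qubit witness; on receiving a state on those $m(n)$ qubits, it appends $a(n)$ ancilla qubits in state $\ket{0}^{\otimes a(n)}$, runs $Z_x$, measures the first $m'(n)$ output qubits in the computational basis, and accepts if and only if the outcome equals $y$. By construction the acceptance probability on witness $\ket{\phi}$ is exactly $\norm{(\bra{y}\otimes I)Z_x\ket{\phi}\ket{0}^{\otimes a(n)}}_2^2$, which is what $\class{BSCSS}$ is phrased in terms of.

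For completeness, suppose $x$ is a yes-instance, so there is a subset $S \subseteq [2^{m(n)}]$ with acceptance probability at least $1-\alpha$. Since $\alpha \leq \frac{1}{257(m(n)+3)} \leq \frac{1}{3}$ (for all relevant $n$), the subset state $\ket{S}$ is a legal $\SQMA$ witness accepted with probability at least $2/3$, so the completeness condition holds directly without any amplification. For soundness, suppose $x$ is a no-instance, so every subset state is accepted with probability at most $\alpha$; I need to bound the acceptance probability of an \emph{arbitrary} $m(n)$-qubit state $\ket{\psi}$. This is exactly where the Subset State Approximation Lemma (Lemma~\ref{lemma:approximation}) enters: there is a subset $S$ with $|\bracket{S}{\psi}| \geq \frac{1}{8\sqrt{m(n)+3}}$. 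Writing $C$ for the POVM element corresponding to acceptance (here $C = Z_x^\dag(\kb{y}\otimes I)Z_x$ restricted appropriately, with $0\preceq C\preceq I$), the same chain of inequalities as in the proof of Theorem~\ref{thm:QMAequalsSQMA} — namely $\inner{C}{\kb{S}} = \inner{C}{\kb{\psi}} - \inner{C}{\kb{\psi}-\kb{S}}$, together with Equations~\eqref{UnitVectorIdentity1}, \eqref{UnitVectorIdentity2} and $\sqrt{1-t^2}\leq 1-t^2/2$ — gives
\[
\bra{\psi}C\ket{\psi} \leq \bra{S}C\ket{S} + \frac{1}{2}\trnorm{\kb{\psi}-\kb{S}} \leq \alpha + 1 - |\bracket{\psi}{S}|^2 \leq \alpha + 1 - \frac{1}{64(m(n)+3)}.
\]
Hmm — this bound is useless as stated (it exceeds $1/3$), so a direct application will not do.

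The fix, which is the main thing to get right, is to first \emph{amplify the gap} of the $\class{BSCSS}$ instance before applying the lemma, exactly as in Theorem~\ref{thm:QMAequalsSQMA}: run the basic verifier, then use standard $\QMA$-style error reduction (parallel repetition on a tensor-power witness) so that the soundness bound $\alpha$ is driven down to $2^{-r(n)}$ while completeness stays above $1-2^{-r(n)}$; crucially, a tensor power of a subset state is again a subset state (over a larger index set), so completeness is preserved within the $\SQMA$ framework. Equivalently — and this is why the theorem is phrased with the specific constant $\frac{1}{257(m(n)+3)}$ — one checks that when $\alpha$ is this small, the bound $\bra{\psi}C\ket{\psi} \leq \alpha + \bra{S}C\ket{S}$ reasoning has to be run the other direction: for a no-instance we want to \emph{upper bound} $\bra{\psi}C\ket{\psi}$, and since there is no \emph{lower} bound forcing $\ket{\psi}$ near a subset state, the correct move is contrapositive. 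If some $\ket{\psi}$ were accepted with probability $\geq \frac{1}{2}|\bracket{\psi}{S}|^2 + \alpha$-type quantity large enough, then by the argument of Theorem~\ref{thm:QMAequalsSQMA} the approximating subset state $\ket{S}$ would be accepted with probability $\geq \frac{1}{2}|\bracket{\psi}{S}|^2 - (1-p) \geq \frac{1}{128(m(n)+3)} - \ldots$, contradicting the no-promise once $\alpha < \frac{1}{257(m(n)+3)}$ (the constant $257 > 2\cdot 128$ giving the needed slack). So the real content is: assume a no-instance rejects all subset states with the small $\alpha$; then running the verifier inside standard $\SQMA$ amplification, and invoking Lemma~\ref{lemma:approximation} to convert a hypothetical good arbitrary witness into a good subset witness, yields the contradiction, placing $\class{BSCSS}(\alpha)$ in $\SQMA$. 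The main obstacle is bookkeeping the constants so that the gap from Lemma~\ref{lemma:approximation} ($\tfrac{1}{64(m+3)}$ after squaring and halving, i.e. $\tfrac{1}{128(m+3)}$) strictly beats $\alpha$, which is precisely what the hypothesis $\alpha \leq \frac{1}{257(m+3)}$ buys.
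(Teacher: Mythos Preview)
Your verifier and the use of Lemma~\ref{lemma:approximation} are exactly what the paper does, and your displayed soundness bound
\[
\bra{\psi}C\ket{\psi}\;\le\;\bra{S}C\ket{S}+\tfrac{1}{2}\trnorm{\kb{\psi}-\kb{S}}\;\le\;\alpha+1-\tfrac{1}{2}|\bracket{\psi}{S}|^2\;\le\;1+\alpha-\tfrac{1}{128(m(n)+3)}
\]
(you dropped the factor $\tfrac12$, giving $\tfrac{1}{64}$ instead of $\tfrac{1}{128}$) is already the right endpoint. The place where the proposal goes off the rails is the ``fix''. You cannot ``first amplify the gap of the \class{BSCSS} instance before applying the lemma'': the no-promise bounds only \emph{subset states} on $m(n)$ qubits, so after $k$-fold repetition you have no bound whatsoever on subset states of the $k\,m(n)$-qubit witness space, and hence nothing for the lemma to hook onto. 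The contrapositive you then sketch is not an alternative argument either --- it is literally the same inequality rewritten, and yields the same soundness bound near $1$.

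What you are missing is the one sentence that makes the bound ``useful'': you should not try to force the soundness below $1/3$ directly. Instead, observe that the completeness of this basic verifier (with the subset state $\ket S$) is $1-\alpha$, the soundness over all states is at most $1+\alpha-\tfrac{1}{128(m(n)+3)}$, and therefore the completeness--soundness gap of the basic protocol is
\[
(1-\alpha)-\Bigl(1+\alpha-\tfrac{1}{128(m(n)+3)}\Bigr)=\tfrac{1}{128(m(n)+3)}-2\alpha,
\]
which is $\Omega(1/m(n))$ precisely under the hypothesis $\alpha\le\tfrac{1}{257(m(n)+3)}$ (this is where the factor $257>2\cdot 128$ enters, to beat $2\alpha$, not just $\alpha$). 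Only \emph{after} this gap is established do you amplify, and the amplification is now the standard $\QMA$ one (the soundness is already against arbitrary states); the yes-witness remains a subset state because a tensor power of $\ket S$ is again a subset state. This is exactly the paper's proof; your final paragraph circles around it without ever stating the gap calculation that makes it work.
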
 

\begin{proof} 
The $\SQMA$ verification is as follows. First, the verifier receives a state
$\ket{\psi}$, applies $Z_x$ to $\ket{\psi} \ket{0}^{\otimes a(n)}$, then measures the whole state in the computational
basis to see if the outcome agrees with $y$ on the $m'$ bits. 
 
Suppose we have a yes-instance of \class{BSCSS}. Then we know there exists a
subset state which accepts with probability $1-\alpha$. Now suppose we have a
no-instance of \class{BSCSS}. We know that for all subset states $\ket{S}$, $\norm{(\bra{y} \otimes I) Z_x
\ket{S} \ket{0}^{\otimes a(n)}}_2^2 \leq \alpha$. We now show that
there is no state $\ket{\psi}$ such that $\norm{(\bra{y} \otimes I) Z_x
\ket{\psi}\ket{0}^{\otimes a(n)}}_2^2$ is ``large''. Fix an arbitrary state $\ket{\psi}$ and let
$\ket{S}$ be a subset state with overlap at least $1/(8 \sqrt{m(n)+3})$ from the
Subset State Approximation Lemma (Lemma~\ref{lemma:approximation}). We start
with noticing that
\[ \norm{(\bra{y} \otimes I) Z_x \ket{\psi} \ket{0}^{\otimes a(n)}}_2^2 =
\inner{\tr_{A} \left( (I \otimes \kb{0}_{A}) Z_x^{\dagger} (\kb{y} \otimes I) Z_x (I \otimes \kb{0}_{A}) \right)}{\kb{\psi} }, \]
where $A$ is the $a(n)$-qubit register the ancilla qubits act on.
By a similar analysis as in the proof of Lemma~\ref{thm:QMAequalsSQMA}, we have
that 
\begin{eqnarray*} 
\inner{\tr_{A} \left( (I \otimes \kb{0}_{A}) Z_x^{\dagger} (\kb{y} \otimes I) Z_x (I \otimes \kb{0}_{A}) \right)}{\kb{\psi} } 
& \leq & \alpha + \left(1 -
\frac{1}{128(m(n)+3)}\right) \\
 & = & 1 + \alpha - \frac{1}{128(m(n)+3)}.
\end{eqnarray*} 
Therefore, any proof succeeds with probability at most $1 + \alpha -
\frac{1}{128(m(n)+3)}$. The gap between the 
completeness and soundness is therefore at least 
\[ (1-\alpha)- \left(1 + \alpha -
\frac{1}{128(m(n)+3)} \right) = -2 \alpha + \frac{1}{128(m(n)+3)} = \Omega(1/m(n)), \]
using the assumption that $\alpha \leq \frac{1}{257 (m(n) + 3)}$. We can use standard error reduction techniques to put this protocol into $\SQMA$, as desired. 
\end{proof} 

\begin{lemma} 
For any polynomial $r$, the problem \class{BSCSS} is $\QMA$-hard for $2^{-r(n)} \leq \alpha \leq 1/3$. 
\end{lemma}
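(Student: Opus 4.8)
The plan is to reduce an arbitrary promise problem $A = (\ayes,\ano) \in \SQMA$ to \class{BSCSS} in polynomial time; since $\SQMA = \QMA$ by Theorem~\ref{thm:QMAequalsSQMA}, this yields $\QMA$-hardness. Fix an $\SQMA$ verifier family $\set{Q_n}$ for $A$ using a $p(n)$-qubit witness and $q(n)$ ancilla qubits. The first step is error reduction: using standard $\QMA$ amplification~\cite{MarriottW05}, produce an equivalent verifier $V_n$ whose completeness is at least $1 - 2^{-r(n)}$ and whose soundness is at most $2^{-r(n)}$. The only point worth checking is that the honest witness may still be taken to be a subset state. With the in-place amplification of~\cite{MarriottW05} the witness register is untouched, so the same subset state $\ket{S}$ still works; with repetition-based amplification on $k = O(r(n))$ copies, the honest witness $\ket{S}^{\otimes k}$ equals the subset state $\ket{S \times \cdots \times S}$ over the larger register, which is again of the required form (and soundness against entangled witnesses is the usual amplification fact). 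Write $m(n)$ for the size of the witness register of $V_n$ and $a(n)$ for its number of ancilla qubits; both are polynomial in $n$.

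Next I would define the reduction itself. On input $x$ with $\abs{x} = n$, output the classical description $x'$ of the circuit $Z_{x'} := V_n$, with its witness register designated as the $m(n)$ ``input qubits'', its ancilla register as the $a(n)$ ``ancilla qubits'' initialised to $\ket{0}^{\otimes a(n)}$, and with $m'(n) := 1$ and $y := 1$ recording acceptance on the output qubit, which without loss of generality we place first (if several output qubits are measured, $y$ records the accepting pattern, still with $m' \le m + a$). This map is clearly computable in time polynomial in $n$.

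The correctness check is then immediate: for any $m(n)$-qubit state $\ket{\phi}$, the quantity $\norm{(\bra{y} \otimes I) Z_{x'} \ket{\phi} \ket{0}^{\otimes a(n)}}_2^2$ is exactly the probability that $V_n$ accepts the witness $\ket{\phi}$. Hence if $x \in \ayes$, the honest subset state $\ket{S}$ gives $\norm{(\bra{y} \otimes I) Z_{x'} \ket{S} \ket{0}^{\otimes a(n)}}_2^2 \ge 1 - 2^{-r(n)} \ge 1 - \alpha$ (using $\alpha \ge 2^{-r(n)}$), so $x'$ is a yes-instance; and if $x \in \ano$, then for every state $\ket{\phi}$, and in particular for every subset state $\ket{S}$, the same quantity is at most $2^{-r(n)} \le \alpha$, so $x'$ is a no-instance. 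In both cases the promise of \class{BSCSS} holds, and $x \in \ayes$ iff $x'$ is a yes-instance; the upper bound $\alpha \le 1/3$ imposes nothing beyond compatibility with $\alpha \ge 2^{-r(n)}$. Therefore \class{BSCSS} is $\SQMA$-hard, hence $\QMA$-hard, for $2^{-r(n)} \le \alpha \le 1/3$.

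The main (and essentially only) obstacle is the bookkeeping around amplification: checking that the gap can be driven down to $2^{-r(n)}$ with only a polynomial blow-up, that the register layout of the amplified verifier matches the input/ancilla split demanded by \class{BSCSS}, and --- the one genuinely content-bearing point --- that the honest witness stays a subset state after amplification. Once these are settled, the analysis is trivial, since $\norm{(\bra{y} \otimes I) Z_{x'} \ket{\cdot} \ket{0}^{\otimes a(n)}}_2^2$ is, by construction, nothing but the acceptance probability of the verifier.
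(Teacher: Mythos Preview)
Your proposal is correct and follows essentially the same approach as the paper: reduce from $\SQMA$ (hence $\QMA$) by taking an amplified verifier circuit, setting $y = 1$ on the output qubit with $m' = 1$, and observing that the \class{BSCSS} quantity is exactly the verifier's acceptance probability. You are more careful than the paper about justifying that amplification preserves the subset-state witness (via Marriott--Watrous in-place amplification, or the observation that $\ket{S}^{\otimes k}$ is itself the subset state $\ket{S^k}$), a point the paper leaves implicit.
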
 

\begin{proof}
Fix a polynomial $r$ and take any $\SQMA$ verification circuit where we assume the following modifications have been made: 
\begin{itemize} 
\item The unitary acts on an $m(n)$-qubit proof and an $a(n)$-qubit ancilla register $A$. 
\item All measurements are deferred until the end of the verification. Denote
  the cumulative unitary the verifier applies as $V$. 
\item We assume the verifier has a special register, $\calO$, at the end
  containing the outcome of the verification. He then measures it in the
  computational basis and accepts on outcome $1$ and rejects on outcome $0$. 
  {
  \item The completeness of the protocol is least $1 - 2^{-r(n)}$ and the soundness is at most $2^{-r(n)}$. 
  }
\end{itemize} 
Then, we define the string $y$ as the single bit $\ket{1}_{\calO}$, i.e., $m' = 1$ here. Then, for the $\SQMA$ protocol, we see the acceptance probability of a state
$\ket{\psi}$ is precisely  
\[ \norm{(\bra{y} \otimes I) V \ket{\psi} \ket{0}^{\otimes a(n)}}_2^2. \]

We now let $(V, y)$ be an instance of \class{BSCSS} with $2^{-r(n)} \leq \alpha \leq 1/3$. We see that the size of the descriptions of $V$ and $y$, as well as $m$, $a$, and $m'$, are at most polynomial in the size of the $\SQMA$ input. It is
clear from the definition of $\SQMA$, that yes-instances of $\SQMA$ are mapped
to yes-instances of the instance of \class{BSCSS} and similarly no-instances
are  mapped to no-instances. Thus, solving this instance of \class{BSCSS}
decides the $\SQMA$ protocol, as desired.  
\end{proof} 

\section{On the perfectly complete version of $\SQMA$}
\label{sec:pc}



In this section, we study the version of
\SQMA{} with perfect completeness, namely \SQMAone{}.  
Even though we do not prove here that $\SQMA$ admits perfect completeness (i.e., $\SQMA = \SQMA_1$), we
 characterise \SQMAone{} showing that it is equal to a variant of
\SQMA{} where there is an optimal subset state witness.
\begin{definition}[\oSQMA]\label{def:oSQMA}
    A promise problem $A=(\ayes,\ano)$ is in \oSQMA{} if and only if there exist polynomials $p$, $q$ and a polynomial-time uniform family of quantum circuits $\set{Q_n}$, where $Q_n$ takes as input a string $x\in\Sigma^*$ with $\abs{x}=n$, a $p(n)$-qubit quantum state, and $q(n)$ ancilla qubits in state $\ket{0}^{\otimes q(n)}$, such that:
    \begin{itemize}
    \item (completeness) If $x\in\ayes$, then there exists a subset $S \subseteq [2^{p(n)}]$ such that $Q_n$ accepts $(x,\ket{S})$ with probability at least $2/3$ and this subset state \emph{maximizes} the acceptance probability over all states. 
    \item (soundness) If $x\in\ano$, then for any $p(n)$-qubit quantum state $\ket{\psi}$, $Q_n$ accepts $(x,\ket{\psi})$ with probability at most $1/3$.
    \end{itemize} 
\end{definition}

We remark that the perfectly complete versions of $\SQMA$ and $\oSQMA$ coincide since in both cases there is an optimal subset state witness for
yes-instances which leads to acceptance probability $1$. Analogous to the notation for $\QMA_1$ and $\SQMA_1$, we denote the perfectly complete version of $\oSQMA$ as $\oSQMA_1$. 
 
We now state a theorem characterizing $\SQMA_1$ which provides a stronger lower bound for $\QMA_1$.
 
\begin{theorem}
  \label{thm:pc}
  $\SQMA_1 = \OSQMA_1=\OSQMA$ \quad and hence, \quad $\OSQMA \subseteq \QMA_1 \subseteq \QMA .$
\end{theorem}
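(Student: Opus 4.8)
The plan is to establish the chain of equalities $\SQMA_1 = \oSQMA_1 = \oSQMA$ and then observe that $\oSQMA \subseteq \QMA_1$ follows. The equality $\SQMA_1 = \oSQMA_1$ was already remarked: in a perfectly complete protocol every accepting subset state has acceptance probability $1$, which is automatically the maximum, so the two definitions literally coincide. Thus the real content is $\oSQMA_1 = \oSQMA$, and since $\oSQMA_1 \subseteq \oSQMA$ is trivial, the task reduces to showing $\oSQMA \subseteq \oSQMA_1$, i.e.\ amplifying an $\oSQMA$ protocol to perfect completeness while preserving the property that a subset state is an \emph{optimal} witness.

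The approach I would take is to adapt the rewinding technique of Kobayashi, Le Gall, and Nishimura for achieving perfect completeness. The key structural facts we get from the $\oSQMA$ promise, and which must be emphasized, are: (i) in a yes-instance the optimal acceptance probability $p^\ast$ is \emph{achieved} by some subset state $\ket{S}$; (ii) because $\ket{S}$ has all nonzero amplitudes equal to $1/\sqrt{|S|}$ and the verifier's circuit uses only gates from a fixed finite universal set with algebraic entries, $p^\ast = \bra{S} C \ket{S}$ is a rational number (or more carefully, an algebraic number of controlled form) that can be computed, or at least bracketed tightly, from the circuit description; and (iii) $p^\ast$ equals the maximum eigenvalue of the verifier's POVM element $C$ restricted to the relevant subspace only when the optimal witness is an eigenvector — here the subtlety is that a subset state need not be an eigenvector of $C$, so I would instead use that $p^\ast \ge \lambda_{\max}$ is false in general but $p^\ast$ being attained by \emph{some} state forces $p^\ast = \lambda_{\max}(C)$ after the standard reduction to a single projective measurement; I would first amplify the $\oSQMA$ protocol so that completeness is $1 - 2^{-r(n)}$ and soundness $2^{-r(n)}$ using the Subset State Approximation machinery (noting amplification preserves the existence of an \emph{optimal} subset witness, since repetition of a protocol has the tensor power of the optimal witness as its optimal witness, and a tensor power of subset states is a subset state). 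Then, knowing $p^\ast$ exactly (or a close rational lower bound $\beta$), rescale: the verifier, given a witness, runs the original verification, and if it rejects, performs a controlled ``rewind'' that re-randomizes and retries, calibrated by $\beta$ so that a witness achieving acceptance probability exactly $p^\ast$ is now accepted with probability exactly $1$, while any witness is accepted with probability a fixed function of its original acceptance probability, keeping the soundness gap.

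The key steps in order: first, amplify the $\oSQMA$ protocol to exponentially small error, checking that the optimal-subset-witness property is preserved under parallel repetition; second, reduce to a protocol with a single two-outcome projective measurement $\{C, I-C\}$ and argue that in a yes-instance the maximum acceptance probability $p^\ast = \lambda_{\max}(C)$ is attained by a subset state and is an efficiently computable (rational or suitably algebraic) quantity; third, implement the Kobayashi--Le Gall--Nishimura rewinding gadget parametrized by (a rational lower bound arbitrarily close to) $p^\ast$, so that the new protocol accepts the optimal subset witness with probability exactly $1$; fourth, verify that soundness is maintained, so the new protocol is an $\oSQMA_1 = \SQMA_1$ protocol; finally, conclude $\oSQMA \subseteq \QMA_1$ because a perfectly-complete subset-state protocol is in particular a perfectly-complete general-witness protocol, and $\QMA_1 \subseteq \QMA$ is immediate.

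The main obstacle I expect is step two, specifically the interplay between ``a subset state is an optimal witness'' and ``the optimal value is rational and equals $\lambda_{\max}$.'' A subset state is generally \emph{not} an eigenvector of $C$, so it is not a priori clear that $\bra{S}C\ket{S} = \lambda_{\max}(C)$; one needs to use that the $\oSQMA$ definition demands the subset state \emph{maximize the acceptance probability over all states}, which forces the optimum $p^\ast$ to equal $\lambda_{\max}(C)$ and forces $\ket{S}$ to lie in the top eigenspace of $C$ — and then the rationality of $p^\ast$ must be extracted, which requires being careful about the gate set (e.g.\ assuming a gate set with entries in $\mathbb{Q}$ or $\mathbb{Q}[i]$, or more robustly working with a rational bracketing $\beta \le p^\ast \le \beta + 2^{-\text{poly}}$ obtained by binary search and absorbing the slack into the already-exponentially-small soundness error). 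Getting the rewinding calibration to yield \emph{exactly} probability $1$ on the optimal witness, rather than $1 - \varepsilon$, is where this rationality/eigenvalue structure is essential, and it is the crux that distinguishes $\oSQMA$ from $\SQMA$ in this argument.
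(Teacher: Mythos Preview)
Your proposal follows essentially the same route as the paper --- rationality of the optimal acceptance probability (coming from the subset-state structure) followed by the Kobayashi--Le~Gall--Nishimura rewinding technique --- but your execution is muddled in two places where the paper is clean.

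First, the rationality argument. You hedge between ``rational or suitably algebraic,'' ``binary search for a rational bracket $\beta \le p^\ast \le \beta + 2^{-\text{poly}}$,'' and ``a rational lower bound arbitrarily close to $p^\ast$.'' None of the approximate versions can work: to hit acceptance probability \emph{exactly} $1$ after rewinding you need the calibration to be exact. The paper's fix is simply to take the gate set to be Hadamard, Toffoli, and NOT. Then applying the circuit to any basis state $\ket{i}$ yields amplitudes $k_i^j/2^{r/2}$ with $k_i^j \in \mathbb{Z}$ and $r$ polynomial, so the acceptance probability of a subset state $\ket{S}$ is exactly $\bigl(\sum_{j \in A}(\sum_{i \in S} k_i^j)^2\bigr)/(|S|\,2^r)$, a ratio of polynomial-bit integers. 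Since by the $\oSQMA$ promise this subset state is optimal over \emph{all} states, $p^\ast$ is exactly this rational number. No eigenvalue analysis or bracketing is needed.

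Second, you never say how the verifier learns $p^\ast$. The verifier cannot compute it (that would require knowing $S$). In the paper, the prover sends the claimed value $p/q$ classically alongside $\ket{S}$; the verifier checks $p/q \ge 2/3$ and then mixes in free accepts/rejects so that the new maximum acceptance probability is exactly $1/2$ when the claim is honest, and strictly below $1/2$ in no-instances. Rewinding then yields perfect completeness. Your amplification step is also unnecessary (and harmless, as you note that tensor powers of subset states are subset states), but the paper proceeds directly without it.
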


This theorem is proven using a framework very similar to the works of 
Jordan, Kobayashi, Nagaj and Nishimura~\cite{JordanKNN12} and also
Kobayashi, Le Gall and Nishimura~\cite{KobayashiLGN13}.

The idea is to use the  Rewinding Technique
\cite{Watrous09}\cite{KempeKMV08}\cite{KobayashiLGN13} in order to achieve perfect completeness. For using this
technique we need a quantum circuit that has  maximum acceptance
probability $\half$ for yes-instances.
In order to construct such a circuit we first show that if the $\OSQMA$ verifier uses a specific set of gates, then the maximum acceptance probability for a yes-instance is rational and exactly describable using polynomially many bits.
 
\begin{lemma}
  \label{lemma:sqma_rational}
  If an \oSQMA{} verifier uses only Hadamard,  Toffoli and NOT gates, the maximum acceptance
  probability for
  yes-instances has the form $\frac{p}{q}$, for $p,q \in
  \mathbb{N}$, and $\log{p}, \log{q} \leq l(|x|)$ for some polynomial $l$.
\end{lemma}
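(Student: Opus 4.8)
The goal is to show that when an $\oSQMA$ verifier $V$ is built only from Hadamard, Toffoli and NOT gates, the optimal acceptance probability on a yes-instance is a rational number with polynomially many bits. The key structural fact I would exploit is the one emphasized in the definition of $\oSQMA$: for a yes-instance there is a \emph{subset state} $\ket{S}$ that is an \emph{optimal} witness, and its acceptance probability equals the maximum eigenvalue of the positive semidefinite operator $M := (I\otimes\bra{0}_A) V^\dagger (\kb{1}_{\calO}\otimes I) V (I\otimes\ket{0}_A)$ acting on the proof register. So the maximum acceptance probability is $\lambda_{\max}(M)$, and it is attained at a subset state. The plan is to argue that (i) because of the restricted gate set, all entries of $M$ are rationals of polynomial bit-complexity, and (ii) because the optimum is attained at a subset state, $\lambda_{\max}(M)$ can be written directly in terms of these rational entries without invoking the (generally irrational) spectrum of $M$.

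For step (i): a Toffoli gate and a NOT gate have entries in $\{0,1\}$, and a Hadamard gate has entries in $\{\pm 1/\sqrt 2\}$. With $g = \poly(n)$ Hadamard gates in the circuit, every amplitude of $V$ (as a $2^{m+a}\times 2^{m+a}$ matrix) is of the form $c/2^{\lceil g/2\rceil}$ with $c\in\Z$, i.e. a dyadic rational with polynomially many bits (one only needs the ancilla and proof registers, so $m+a = \poly(n)$, meaning each amplitude is specified by polynomially many bits as well). Since $M$ is obtained from $V$ by conjugation, multiplication by the projector $\kb{1}_{\calO}\otimes I$, and partial trace/projection onto $\ket{0}_A$ — all operations that preserve dyadic rationality and only polynomially inflate bit-length — every entry $M_{ij}$ is a dyadic rational of bit-length $l(n)$ for some polynomial $l$. (In fact the circuit also needs amplitudes to be \emph{real}; Toffoli, NOT and Hadamard are all real matrices, so $M$ is a real symmetric matrix with dyadic rational entries — this is the point of restricting to exactly this gate set.)

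For step (ii): here is where the $\oSQMA$ promise does the work. Let $\ket{S}$ be the optimal subset state, $|S| = s$. Then the maximum acceptance probability is
\[
\lambda_{\max}(M) = \bra{S} M \ket{S} = \frac{1}{s}\sum_{i,j\in S} M_{ij},
\]
which is a sum of at most $s^2 \le 4^{m}$ dyadic rationals divided by the integer $s$. This is manifestly a rational number $p/q$ with $\log p, \log q \le l'(n)$ for a polynomial $l'$: the numerator is an integer bounded in bit-length by $l(n) + O(m)$ and the denominator is $s \cdot 2^{l(n)} \le 2^{m+l(n)}$. So no appeal to characteristic polynomials or algebraic number bounds is needed; the subset-state optimality collapses the eigenvalue to an explicit rational combination of matrix entries.

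**Main obstacle.** The delicate point is making step (i) fully rigorous: one must track that the various register manipulations (deferring measurements, introducing the $\calO$ register, partial trace over $A$, projecting ancillas onto $\ket 0$) keep the total dimension $2^{m+a}$ polynomially bounded and keep all amplitudes dyadic-rational — in particular one should invoke, or cite, a Solovay–Kitaev-type or gate-set normal-form argument that an arbitrary $\oSQMA$ verifier can indeed be assumed to use only Hadamard/Toffoli/NOT while preserving the \emph{exact} acceptance probabilities (Shi's universality of Hadamard+Toffoli, \cite{JordanKNN12}), so that the lemma is not vacuous. Once the gate set is fixed, the arithmetic is routine; I would state the dyadic bounds as a short inductive claim on circuit depth and then conclude via the displayed identity $\lambda_{\max}(M) = \frac1s\sum_{i,j\in S}M_{ij}$.
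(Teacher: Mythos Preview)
Your proposal is correct and takes essentially the same approach as the paper: both compute the acceptance probability of the optimal subset-state witness directly and observe that, with the Hadamard/Toffoli/NOT gate set, it is a ratio of integers of polynomial bit-length (the paper works with the output-state amplitudes $k^j_i/2^{r/2}$ rather than with the matrix $M$, but your expression $\tfrac{1}{s}\sum_{i,j\in S}M_{ij}$ unpacks to exactly the same formula). One small slip worth fixing: the entries of $V$ itself are not dyadic rationals when the number $g$ of Hadamard gates is odd (they carry a factor of $\sqrt{2}$), so your claim that each amplitude of $V$ lies in $2^{-\lceil g/2\rceil}\mathbb{Z}$ is false as stated; this is harmless for the argument, however, since $M$ is quadratic in $V$ and therefore does have entries in $2^{-g}\mathbb{Z}$.
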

\begin{proof}
  As noticed in~\cite{JordanKNN12}, if we apply
  a quantum circuit that consists only of Hadamard, Toffoli, and NOT gates on the computational basis state $\ket{i}$,
  the final superposition will be of the form $\dsum_j
  \frac{k^j_i}{2^\frac{r}{2}} \ket{j}$ for some $k^j_i \in \mathbb{Z}$ and $r \in
  \mathbb{N}$, such that $r$ is polynomially bounded.  Thus, for an optimal \oSQMA{} witness $\frac{1}{\sqrt{|S|}} \dsum_{i
  \in S} \ket{i}$ we have that the final state is $\dfrac{1}{\sqrt{|S|}} \dsum_{i
  \in S} \dsum_j \dfrac{k^j_i}{2^\frac{r}{2}} \ket{j}$.
  Therefore, the maximum acceptance probability is 
  \[ \dfrac{\dsum_{j \in A}
\left(\dsum_{i \in S}k^j_i\right)^2}{|S| 2^r}, 
\]  
\noindent where $A$ is the subset of computational basis states such that the output qubit is
$\ket{1}$ (the measurement outcome where the verifier accepts). It follows that the
  acceptance probability is rational and succinct.
\end{proof} 

In this case, the prover can send a  {classical description} of the maximum
acceptance probability in the original protocol, together with the original
(optimal) subset state proof. Let $a_x$ be the maximum acceptance probability of
the input $x$ in the original protocol and $\frac{p}{q}$ be the value claimed
by the prover to be the maximum
acceptance probability.
The verifier first checks if $\frac{p}{q} \geq \frac{2}{3}$, and rejects
otherwise.

We can create a new Verifier circuit that flips $r+1$ quantum coins, for the
smallest $r$ such that $2^r \geq q$, and either
\begin{enumerate}
  \item accepts with probability $\frac{2^r-p}{2^{r+1}}$,
  \item rejects with probability $\frac{2^r-q+p}{2^{r+1}}$, or
  \item runs the original protocol with probability $\frac{q}{2^{r+1}}$.
\end{enumerate}

For a yes-instance, the prover is honest and sends the correct value
$\frac{p}{q} = a_x$. The maximum success probability of the new Verifier circuit is exactly 
\[ \dfrac{2^r - p}{2^{r+1}} + \dfrac{q}{2^{r+1}} \cdot \dfrac{p}{q} =
\dfrac{1}{2}. \]
For a no-instance the success probability is at most 
\[\dfrac{2^r - p}{2^{r+1}} + \dfrac{q}{2^{r+1}} \cdot a_x \leq \dfrac{2^r - p}{2^{r+1}} + \dfrac{q}{2^{r+1}} \cdot \dfrac{1}{3} \leq
\dfrac{2^r}{2^{r+1}} - \frac{q}{3\cdot 2^{r+1}} \leq \dfrac{1}{2} -
\dfrac{1}{12} = \frac{5}{12} \, ,\] 
\noindent where we used the fact that the maximum acceptance probability $a_x$ in the
original protocol is at most $\frac{1}{3}$,  $\frac{p}{q} \geq \frac{2}{3}$ (which can
be verified with probability $1$) and that $2q \geq 2^r$, by definition.

Then, as we said, we can apply the Rewinding Technique with this new Verifier, and achieve
perfect completeness. For this, we change the initial projector to have all zeroes in the coin
register and the corresponding acceptance projector as described above.
For further and explicit details of why such a
protocol attains perfect completeness, we refer the reader to
reference~\cite{KobayashiLGN13} (in particular, Propositions $17$ and $18$).

\section{Conclusions}

  Our results provide a new way of looking at the class $\QMA$ and provide some insight on the power of quantum witnesses. It shows that all quantum witnesses can be replaced by the "simpler" subset states, a fact that may prove helpful both in the case of a quantum PCP and for proving that $\QMA$ admits perfect completeness, towards which we have provided some more partial results. Of course, the main question remains open: Are quantum witnesses more powerful than classical ones and if so, why? What we know now, are some things that do not make the quantum witnesses more powerful, for example arbitrary amplitudes or relative phases. 

  We conclude by stating some open problems. First, can we restrict the quantum
  witnesses even further? In addition, even though we
  proved $\SQMA(2) = \QMA(2)$, we are unable to use this result to show a better
  upper bound than $\class{NEXP}$, the best upper bound currently known.
  Also, can we prove perfect completeness for $\QMA$ through our new
  characterisation? Last, can we obtain other complete problems for $\QMA$, possibly related to finding the energy of subset states of local Hamiltonians?

\section*{Acknowledgements} 

The authors acknowledge support from a Government of Canada NSERC Postdoctoral Fellowship, the French National Research Agency (ANR-09-JCJC-0067-01), and the European Union (ERC project QCC 306537). Research at the Centre for Quantum Technologies at the National University of Singapore is partially funded by the Singapore Ministry of Education and the National Research Foundation, also through the Tier 3 Grant ``Random numbers from quantum processes,'' (MOE2012-T3-1-009). 

\bibliographystyle{alpha}
\bibliography{gks}

\end{document}